\newcommand*{\compactldots}{
  \mathinner{{\ldotp}{\ldotp}{\ldotp}}
}
\newtheorem{theorem}{Theorem}
\newtheorem{lemma}[theorem]{Lemma}
\newtheorem{corollary}[theorem]{Corollary}
\newtheorem{definition}[theorem]{Definition}
\newcommand{\reals}{\mathbb{R}}
\renewcommand{\Re}{\reals}
\DeclareMathOperator{\poly}{poly}
\newcommand{\para}[1]{\smallskip\noindent{\bf #1.}\/}
\newcommand{\MOVEDTOMAIN}[1]{}
\newcommand{\full}{\mathrm{full}}
\newcommand{\dimin}{\mathrm{dimin}}
\newcommand{\cross}{\mathrm{cross}}
\newcommand{\beforeproof}{\vspace{-0.4cm}}
\newcommand{\afterproof}{\vspace{-0.2cm}}
\newcommand{\citet}{\cite}
\begin{document}

\title{On Additive Approximate Submodularity}

\author{Flavio Chierichetti \\
Sapienza University \\
Rome, Italy \\
{\small \tt flavio@di.uniroma1.it}
\and
Anirban Dasgupta \\
IIT \\
Gandhinagar, India \\
{\small \tt anirban.dasgupta@gmail.com}
\and
Ravi Kumar \\
Google \\
Mountain View, CA \\
{\small \tt ravi.k53@gmail.com}
}

\maketitle
\date{}

\begin{abstract}
  A real-valued set function is \emph{(additively) approximately submodular} if it satisfies the submodularity conditions with an additive error.
  Approximate submodularity arises in many settings, especially in machine learning, where the function evaluation might not be exact. In this paper we study how close such approximately submodular functions are to truly submodular functions.

  We show that an approximately submodular function defined on a ground set of $n$ elements is $O(n^2)$ pointwise-close to a submodular function.  This result also provides an algorithmic tool that can be used to adapt existing submodular optimization algorithms to approximately submodular functions. To complement, we show an $\Omega(\sqrt{n})$ lower bound on the distance to submodularity.

  These results stand in contrast to the case of approximate modularity, where the distance to modularity is a constant, and approximate convexity, where the distance to convexity is logarithmic.
\end{abstract}

\section{Introduction}

The study of submodular functions is classical.  A real-valued set
function is said to be submodular if it satisfies the so-called
``diminishing returns property'', i.e., the incremental value of
adding a new element to the set decreases as the set becomes larger.
Submodularity surfaces naturally in many situations and has
turned out to be a key concept in combinatorial optimization.  The
well-known greedy algorithm for submodular maximization, owing to its
simplicity, has found important applications in many subfields of
computer science including approximation algorithms, machine learning,
natural language processing, and game theory.  Submodularity also has
profound relationships to both convex and concave functions.  See the
books by~\citet{lovasz, Satoru, Schrijver} for more
details of the connection and applications of these concepts, as well
as the monograph by~\citet{bach} and the references
in~\citet{KSG, KC} for developments in machine learning that rely on
submodularity.

In this paper we develop and study the notion of approximate
submodularity.  Informally, a real-valued set function is
approximately submodular if the definition of submodularity is
additively relaxed, for example, the incremental value of adding a new
element to the set either decreases, or {\em increases up to a small
  additive error}, as the set becomes larger.  

\para{Motivation}
There are many reasons to study approximate submodularity.  Firstly, various definitions of approximately submodular functions have been investigated in the machine learning community.  A
powerful notion called \emph{submodularity ratio}~\cite{DK, KCZXWC}, which captures how much more is the value of adding a large subset compared to the value of adding its individual elements, has been studied to understand why greedy algorithms perform well even with correlations.  \emph{Approximate submodularity}, where the submodular inequalities can hold to within an additive error, was considered in~\citet{KC,KSG} to handle failures and uncertainties in models; while their definition is analogous to ours, importantly, they also require the function to be monotone and non-negative.  

The aim of these works is to extend the scope of the standard greedy algorithm and make it more applicable to real-world settings where the function is not exactly submodular but only weakly satisfies the submodularity property.  Some of these results can be used to find an approximate maximum of a set function (which is close to submodularity in some sense) only if it is also non-negative and monotonically increasing.  It is less clear how to use them to optimize approximately submodular functions in other ways, e.g., minimization, maximization under constraints, etc, when non-negativity and monotonicity are not satisfied.  

Other papers dealing with approximate submodularity~\cite{ts16,sh18} assume explicitly that the input set function is to within a multiplicative constant of an (exactly) submodular function. 
While these papers provide useful algorithms for such functions, they do not address the basic question of how far are  ``approximately submodular'' functions to submodularity, the main focus of our work.

Secondly, approximate submodularity is a natural computational and
mathematical notion.  For example, approximate submodularity has been
studied in property testing~\cite{PRR, seshadhri2014submodularity},
along with related properties of monotonicity and modularity.  In the
continuous setting, approximate convexity has been studied in
functional analysis~\cite{Cholewa}, where the stability of a
functional equation associated with the convex function has been
investigated by many researchers over the last few decades.  In
contrast, much less is known about the properties of approximate
submodularity.

The stronger and simpler notion of approximate modularity, where the function satisfies the set additive equation to an additive approximation, has been studied both in the mathematics and theoretical computer science communities.  A classical result of~\citet{kalton1983uniformly} showed that an approximately
modular function is constant-close to a truly modular function;
see~\cite{bondarenko2013concentrators, feige2017approximate} for
recent improvements to the constant. Progress on approximate
modularity might be seen as a first step towards  progress on the
approximate submodularity problem.  However the latter poses
substantial technical challenges and it is unclear how much, if any,
of the machinery developed for approximate modularity can be adapted to the
approximate submodularity case.

\para{Our contributions}
In this paper we study the concept of approximate submodularity, with
the goal of understanding the following question:
\begin{quotation}
\sl
Given a set function
that satisfies an approximate submodularity property,
how close can it
be to a truly submodular function in the $\ell_\infty$-sense? 
\end{quotation}
Let $n$ be the number of elements in the universe. 
Recall that a function $f: 2^{[n]} \rightarrow \reals$ is defined to be submodular if it satisfies $f(S) + f(T) \ge f(S \cap T) + f(S \cup T)$, for each pairs of sets $S, T \subseteq [n]$.
We relax those inequalities so that each of them allows a small additive error, and we  study upper and lower bounds on the distance of
such approximately submodular functions to their closest submodular functions.

Our main positive
result is that if a function satisfies 
submodularity in an approximate manner, then it is no more than $O(n^2)$-close to
being submodular\footnote{We will later detail the role of the additive error in the closeness bound.}.
This upper bound is constructive in that it is given through an algorithmic filter, which returns the value of a ($O(n^2)$-close) submodular function on the generic set, in time $\poly(n)$. This filter can then be used as a black-box by generic submodular algorithms.  We also conduct some experiments with this filter and discuss the  results.
As for the lower bound, we
show that there is a function that satisfies submodularity approximately but is
$\Omega(\sqrt{n})$-far from every submodular function.

These findings are interesting for two main reasons.  First, they show
that approximate submodularity, which is at polynomial distance to
submodularity, is very different from approximate modularity, which is
at constant distance to
modularity~\cite{kalton1983uniformly,feige2017approximate}.  More
intriguingly, it is also different from approximate convexity since if
a function is approximately convex, its distance to convexity is only
logarithmic~\cite{Cholewa}.  It is to be noted that, even though submodularity
is sometimes viewed as a discrete analog of convexity, when it comes to
approximate notions, their properties vastly differ.  

Secondly, for approximately submodular functions that do not satisfy the joint conditions of being both monotone and non-negative, our construction, that creates a submodular function that is point-wise close to the given function, gives the first optimization technique with any provable guarantee.

We also study the distance of approximate submodular functions to submodularity, in the case where the set of constraints that are guaranteed to hold to within an additive error is restricted to, e.g., diminishing returns, and ``second-order derivatives'' constraints.

\para{Overview of techniques} 
To prove the $O(n^2)$ upper bound, we give a {\em filter} that is able
to transform the value (at an arbitrary set) of an approximately
submodular function into the value of a submodular function at that
same set. The function is not created explicitly beforehand (such an
approach would take exponential time) but instead is defined incrementally and can be very
efficiently computed as queries are presented to the filter. The
filter can then be used to seamlessly apply algorithms for submodular functions
to  approximately submodular functions.

The lower bound 
is the most technically involved of our results. It is based
on three steps. First, we 
obtain a general lower bound on
the distance to submodularity for a class of
``block''-functions. Then, we propose an approximately submodular
function that maximizes this lower bound on the distance to
submodularity. We end up with a
lower bound of $\Omega(\sqrt{n})$ on the distance to submodularity of
functions that are approximately submodular. 

\para{Other notions}
We relate our notion to other existing notions of approximate submodularity.

\textit{Marginal gain.} Exact submodularity can equivalently be defined  in terms of diminishing marginal gains. In Section~\ref{sec:other}, we consider the $\epsilon$-approximate marginal gain property. We observe that the algorithmic filter in the main body of the paper, and its additive $O(\epsilon n^2)$ approximation, directly carry over to the case of $\epsilon$-approximate marginal gain.  We also give a lower bound on the distance of functions satisfying the $\epsilon$-approximate marginal gain property.

\textit{Submodularity ratio~\cite{DK}.} Submodularity ratio and approximate submodularity are very different, and seem to be unrelated notions. Indeed, consider the following examples.

Let $f: [2] \rightarrow \Re$ be such that $f(\varnothing) = 0, f(\{1\}) = f(\{2\}) = \epsilon^{-2} - \epsilon^{-1}$ and $f(\{1,2\}) = 2\epsilon^{-2}$, for a small $\epsilon> 0$. Then, the submodularity ratio of $f$ is $1-\epsilon$. (Recall that the submodularity ratio of submodular functions is at least $1$.  Thus, in terms of submodularity ratio, $f$ is as close to submodularity as a non-submodular function can be.) On the other hand, the distance of $f$ to submodularity according to our definition (i.e., its additive distance to submodularity) is $\Omega(1/\epsilon)$, i.e., it is very large.
Conversely, let $f: [2] \rightarrow \Re$ be s.t. $f(\varnothing) = f(\{1\}) = f(\{2\}) = 0, f(\{1,2\}) = \epsilon$. Then, the submodularity ratio of $f$ is $0$ (i.e., it is as bad as possible) but $f$ is at $\ell_{\infty}$ distance at most $\epsilon$ from submodularity (i.e., it is close).
Tightening the relationship for special cases looks intriguing.

\textit{Horel--Singer notion~\cite{ts16}.} If $f$ satisfies the Horel--Singer property with factor $(1+\epsilon)$, it is easy to show that $f$ is $O(\epsilon \max_S
|f(S)|)$-approximately submodular; no better bound is possible, in general.

\para{Roadmap}
In
Section~\ref{sec:relatedwork}, we discuss related work.
In Section~\ref{sec:prelim} we introduce the notation.    In
Section~\ref{sec:ub}, we provide our algorithmic filter, which can
also be used to upper bound the distance to submodularity of
approximately submodular functions. In Section~\ref{sec:lb}, we prove
our lower bound on the distance to submodularity of functions that
are approximately submodular. 
In Section~\ref{sec:other} we consider different classes of constraints.  In Section~\ref{sec:expts}, we detail the experimental results we obtained through our algorithmic filter. 

 \section{Related work}\label{sec:relatedwork}

Approximate notions of submodularity have recently been studied in
cases where the diversity, or valuation, measures are themselves being
estimated from the data, and hence the submodularity constraints hold
only approximately.  While modeling the
effectiveness of greedy feature selection,~\citet{DK} defined an approximate
version of submodularity that characterizes the multiplicative factor
of incremental gain obtained by adding $k$ items, versus their union,
to a ground set. Additively approximate submodular functions were used
by~\citet{KSG, KC} to model sensor placements. Motivated by
the robustness of welfare guarantees,~\citet{roughgarden17whenare} relax the typical assumption of
submodularity on valuation functions to one where they are only
pointwise multiplicatively close to submodular functions.

Our additive definition is aligned with the notion of an approximately
convex function, defined first by~\citet{HU} as part of
the study of stability of functional equations. Approximate convex functions are
$O(\log n)$-close to convex functions, where the domain is
$\Re^n$~\citet{HU, Cholewa}. Unfortunately, the proof techniques for showing closeness to
convexity depend heavily on the fact that the function domain itself
is either a Banach space or convex. Belloni et al.~\citet{belloni2015escaping} studied methods to optimize such
functions using random walk-based techniques.

The work of~\citet{kalton1983uniformly} showed that
an approximately modular function is $44.5$-close to a modular
function, using inequalities created via a ``split and merge''
strategy; this constant was improved to $35$ by~\citet{bondarenko2013concentrators}.  
Recently,  these bounds
were improved by~\citet{feige2017approximate} to $12.65$,
using a number of novel ideas including some
from~\citet{chierichetti2015approximate}.  Unfortunately, the
strategies
in~\cite{kalton1983uniformly,chierichetti2015approximate,bondarenko2013concentrators,feige2017approximate}
depend crucially on the two-sided nature of the approximate modular
inequalities (i.e., the fact that one can lower and upper bound $f(A)
+ f(B)$ in terms of $f(A \cup B) + f(A \cap B)$ plus some additive
error) and seem hard to be adapted to the submodular setting.

Seshadri and Vondrak~\citet{seshadhri2014submodularity} studied the
testability of submodularity property in terms of the number of
queries needed to decide whether the function be modified on a small
fraction of the inputs to make it submodular; such a definition is
different from the multiplicative or additive definitions mentioned
above.  Goemans et al.~\citet{goemans2009approximating} and Balcan and Harvey~\citet{balcan2011learning} gave sampling based algorithms to learn
submodular function to a multiplicative factor. In the approximate
modularity case,~\citet{chierichetti2015approximate} and
\citet{feige2017approximate} give randomized and deterministic
algorithms respectively for learning the closest modular function via
queries.

 \newcommand{\myand}{\mbox{ and }}
\newcommand{\myx}{\mathrm{x}}

\section{Preliminaries}
\label{sec:prelim}

Let $[n]= \{1, \ldots,n\}$. We will denote subsets of $[n]$ as $A, B,
S, T, \ldots$. 
We consider the set of functions $2^{[n]}\rightarrow \Re$,
and denote them by $f(\cdot), g(\cdot), \ldots$.
We begin by defining the {\em distance} between two set functions.
\begin{definition}[Distance]
  The \emph{distance} between functions $f, g: 2^{[n]} \rightarrow
  \Re$ is defined as $ \max_{S \subseteq [n]} |f(S) - g(S)|. $
\end{definition}
Recall that a set function is \emph{submodular} if, 
for each $A, B \subseteq [n]$, it holds that~\cite{Schrijver}
$
f(A) + f(B) \ge f(A \cup B) + f(A \cap B).
$
Formally, there are three equivalent
natural ways of defining submodular functions~\cite{Schrijver}. Each
of them asks for the submodular rule to be satisfied on some class
$\mathcal{C} \subseteq \binom{2^{[n]}}2$ of pairs of
sets:\begin{enumerate}
\item[(i)] The \emph{full} constraints set:
$\mathcal{C}^{\full} = \left\{\{A,B\} \mid \min\left(|A|, |B|\right) \ge |A \cap B| + 1\right\}$, 
e.g., $f(A) + f(B) \ge f(A \cup B) + f(A \cap B)$, for
each $A \not\subseteq B \not\subseteq A$.
\item[(ii)] The \emph{diminishing returns} constraints set:
$\mathcal{C}^{\dimin} = \left\{\{A, B\} \mid \min\left(|A|, |B|\right)
  = |A \cap B| + 1\right\}$, e.g., $f(A \cup \{c\}) - f(A) \ge f(B
\cup \{c\}) - f(B)$ for each $A \subset B$ and for each $c \not\in
B$.
\item[(iii)] The \emph{cross-second order derivatives} constraints set:
$\mathcal{C}^{\cross} = \left\{\{A, B\} \mid |A| = |B| = |A \cap B|
  +1\right\}$, e.g., $f(A \cup \{a_1\}) + f(A \cup \{a_2\}) \ge f(A
\cup \{a_1, a_2\}) + f(A)$, for each $A \subseteq [n]$ and for each
$\{a_1, a_2\} \in \binom{[n] \setminus A}2$.
\end{enumerate}
We observe that the three sets of constraints range from the full set
$\mathcal{C}^{\full}$ of constraints\footnote{All constraints induced
  by pairs in $\binom{2^{[n]}}2 \setminus \mathcal{C}^{\full}$ are
  satisfied by any function and we have thus excluded them from the
  definition.} to the smallest set $\mathcal{C}^{\cross}$ of
constraints guaranteeing submodularity.\footnote{One can show that each set of
  constraints $\mathcal{C}$ that is not a superset of
  $\mathcal{C}^{\cross}$ does not guarantee a finite distance from
  submodularity \emph{even if} all the submodular constraints induced
  by the pairs in $\mathcal{C}$ hold with no error. Thus,
  $\mathcal{C}^{\cross}$ is the unique minimal set of constraints
  guaranteeing a finite distance to submodularity.}

We introduce the following notion. 
\begin{definition}[Approximate $\mathcal{C}$-submodularity]
  We say that a function $f: 2^{[n]} \rightarrow \Re$ is
  \emph{$\epsilon$-approximately $\mathcal{C}$-submodular}, 
  if it
  satisfies the constraints $$f(A) + f(B) \ge f(A \cup B) + f(A \cap B)
  - \epsilon,$$ for each $\{A,B\} \in \mathcal{C}$.
\end{definition}
The distance of a function $f$ to submodularity is the 
distance of $f$ to its closest submodular function.  A function $f$ is
\emph{$\alpha$-far} from submodularity if its distance to
submodularity is at least $\alpha$, and is \emph{$\alpha$-close} to
submodularity otherwise.  Note that the ratio
of the distance of $f$ to submodularity and $\epsilon$ remains invariant under scaling. Hence, by scaling, we can take $\epsilon = 1$ without loss of generality;
when convenient, we refer to a $1$-approximately
$\mathcal{C}$-submodular function as simply {\em approximately
  submodular}.

 \newcommand{\CP}{\mathcal{P}}
\newcommand{\CC}{\mathcal{C}}

For $\myx \in \{\full, \dimin, \cross\}$, we use  $\epsilon^{\myx} = \epsilon^{\myx}(f)$ to denote the minimum $\epsilon \ge 0$ for which the function $f: 2^{[n]} \rightarrow \Re$ is $\epsilon$-approximately $\mathcal{C}^{\myx}$-submodular.

Clearly, for any  function $f$ it holds $\epsilon^{\full} \ge \epsilon^{\dimin} \ge \epsilon^{\cross}$, since $\mathcal{C}^{\cross} \subseteq \mathcal{C}^{\dimin} \subseteq \mathcal{C}^{\full}$.

\section{The upper bound} 
\label{sec:ub}

We show that approximately
$\mathcal{C}^{\cross}$-submodular functions are $O(n^2)$-close to
submodular functions.
\begin{theorem}
\label{thm:ubcross}
For any $\epsilon$-approximately $\mathcal{C}^{\cross}$-submodular function,
there is a submodular function at distance at most $(1/8) \cdot
\lfloor n^2/2 \rfloor \cdot \epsilon$.
\end{theorem}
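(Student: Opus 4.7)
The plan is to add to $f$ a simple quadratic-in-cardinality submodular correction and then re-center. Concretely, I would set
\[
  g(S) \;=\; f(S) + \tfrac{\epsilon}{2}\,|S|(n-|S|) - c_0
\]
for a scalar $c_0$ chosen to minimise $\max_S|f(S)-g(S)|$, and then check that $g$ is submodular on the full class $\mathcal{C}^{\full}$ and that its distance from $f$ matches the claim.

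The key lemma is a \emph{defect amplification} bound: for every pair $A,B\subseteq[n]$,
\[
  \Delta(A,B) \;:=\; f(A\cup B)+f(A\cap B)-f(A)-f(B) \;\le\; \epsilon\cdot|A\setminus B|\cdot|B\setminus A|.
\]
I would prove this via a two-level telescoping. Enumerate $A\setminus B=\{a_1,\dots,a_p\}$ and $B\setminus A=\{b_1,\dots,b_q\}$, and let $C=A\cap B$. First expand $\Delta(A,B)$ as a telescoping sum of $q$ second differences, each comparing the marginal gain of $b_j$ at the base set $C\cup\{b_1,\dots,b_{j-1}\}$ with its marginal gain at the inflated set $A\cup\{b_1,\dots,b_{j-1}\}$. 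Then expand each such difference as a second telescoping sum over the insertions of $a_1,\dots,a_p$. Every innermost summand is then exactly a cross-second-order expression $f(Y\cup\{a_i,b_j\})+f(Y)-f(Y\cup\{a_i\})-f(Y\cup\{b_j\})$ at some set $Y$ disjoint from $\{a_i,b_j\}$, hence at most $\epsilon$ by the $\mathcal{C}^{\cross}$-hypothesis. Summing the $pq$ terms gives the claim; the cases $p=0$ or $q=0$ give $\Delta(A,B)=0$ trivially.

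A routine expansion shows that $h(S):=\tfrac{\epsilon}{2}|S|(n-|S|)$ satisfies $h(A)+h(B)-h(A\cup B)-h(A\cap B)=\epsilon\,|A\setminus B|\cdot|B\setminus A|$ for all $A,B$. Hence $g=f+h-c_0$ automatically satisfies every full submodular inequality by the amplification lemma. Taking $c_0$ to be the midpoint $\tfrac{\epsilon}{4}\lfloor n^2/4\rfloor$ of the range of $h$ yields distance $\tfrac{\epsilon}{4}\lfloor n^2/4\rfloor=\tfrac{\epsilon}{8}\lfloor n^2/2\rfloor$, using $2\lfloor n^2/4\rfloor=\lfloor n^2/2\rfloor$, which matches the bound in the theorem.

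The main step requiring real care is the defect-amplification lemma: one must arrange the two-level telescoping so that every innermost term is genuinely a $\mathcal{C}^{\cross}$ constraint (in particular, $a_i\ne b_j$ and both are absent from the set $Y$ at which the constraint is evaluated), rather than some broader second difference for which no bound is available. Once amplification is established, the rest of the proof is a short algebraic verification.
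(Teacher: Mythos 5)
Your proposal is correct and is essentially the paper's proof: the paper uses exactly the same filter, $g_{f,\epsilon}(S)=f(S)+\epsilon\bigl(\tfrac18\lceil n^2/2\rceil-\tfrac12(|S|-\tfrac n2)^2\bigr)$, which expands to your $f(S)+\tfrac{\epsilon}{2}|S|(n-|S|)-c_0$ with the same centering constant, and the same computation of the $\tfrac{\epsilon}{8}\lfloor n^2/2\rfloor$ distance. The only (minor) difference is in verifying submodularity of $g$: the paper checks just the $\mathcal{C}^{\cross}$ constraints (each shifted by exactly $+\epsilon$ by the correction term) and invokes the standard equivalence that these suffice, whereas your defect-amplification lemma re-derives that implication via the two-level telescoping — a valid, self-contained alternative.
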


Given an $\epsilon$-approximately 
submodular
function $f: 2^{[n]} \rightarrow \reals$, we define the function $
g_{f,\epsilon}: 2^{[n]} \rightarrow \reals$,
\[
g_{f,\epsilon}(S) = f(S) + \epsilon \cdot \left(\frac18
  \left\lceil\frac{n^2}2\right\rceil - \frac12 \left(|S| - \frac
    n2\right)^2\right).\]
We will prove Theorem~\ref{thm:ubcross} by
showing that $g_{f,\epsilon}$ is a submodular function at distance at most
$\epsilon \cdot n^2/16$ from $f$.  We begin by bounding the distance to $f$.
\begin{lemma}
\label{lem:ub1}
For each $S \subseteq [n]$, it holds $|f(S) - g_{f,\epsilon}(S)| \le
\epsilon \cdot (1/8) \cdot \left\lfloor n^2/2 \right\rfloor 
\le \epsilon \cdot n^2/16$.
\end{lemma}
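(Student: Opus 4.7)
The plan is entirely a direct computation: parametrize by $k = |S| \in \{0,1,\dots,n\}$ and analyze the quantity
\[
h(k) \;=\; \frac{1}{\epsilon}\bigl(g_{f,\epsilon}(S) - f(S)\bigr) \;=\; \frac{1}{8}\left\lceil \frac{n^2}{2}\right\rceil - \frac{1}{2}\left(k - \frac{n}{2}\right)^2,
\]
which is a strictly concave quadratic in the real variable $k$. Its maximum over integer $k \in \{0,\dots,n\}$ is therefore attained at the integer closest to $n/2$, and its minimum at one of the endpoints $k=0$ or $k=n$. So it suffices to evaluate $h$ at these two types of points and check that $|h(k)| \le \tfrac{1}{8}\lfloor n^2/2\rfloor$ in each case.

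I would split by parity to handle the ceiling cleanly. When $n$ is even, $\lceil n^2/2\rceil = \lfloor n^2/2\rfloor = n^2/2$, and evaluating at $k=n/2$ gives $h = n^2/16$, while at $k=0$ or $k=n$ we get $(k-n/2)^2 = n^2/4$ and so $h = n^2/16 - n^2/8 = -n^2/16$; both values have absolute value $\tfrac{1}{8}\lfloor n^2/2\rfloor$. When $n$ is odd, $\lceil n^2/2\rceil = (n^2+1)/2$ and $\lfloor n^2/2\rfloor = (n^2-1)/2$; at the maximizing $k = (n\pm 1)/2$ we have $(k-n/2)^2 = 1/4$, giving $h = (n^2+1)/16 - 1/8 = (n^2-1)/16 = \tfrac{1}{8}\lfloor n^2/2\rfloor$, while at $k \in \{0,n\}$ we obtain $h = (n^2+1)/16 - n^2/8 = -(n^2-1)/16 = -\tfrac{1}{8}\lfloor n^2/2\rfloor$.

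Combining the two parities, $|h(k)| \le \tfrac{1}{8}\lfloor n^2/2\rfloor$ for every integer $k \in \{0,\dots,n\}$, and the final inequality $\tfrac{1}{8}\lfloor n^2/2\rfloor \le n^2/16$ is immediate from $\lfloor n^2/2\rfloor \le n^2/2$. Multiplying through by $\epsilon$ yields the claim.

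There is no real obstacle here; this lemma is just the design check that the additive correction term $\epsilon(\tfrac{1}{8}\lceil n^2/2\rceil - \tfrac{1}{2}(|S|-n/2)^2)$ has been normalized so that its extreme values are symmetric around zero with common magnitude $\tfrac{\epsilon}{8}\lfloor n^2/2\rfloor$. The only mild care is in getting the ceiling/floor matching right for odd $n$, which is exactly why the definition uses $\lceil n^2/2\rceil$ while the bound is stated with $\lfloor n^2/2\rfloor$; the even case is trivial. The substantive work of the upper bound argument (namely, verifying that $g_{f,\epsilon}$ is actually submodular) is not part of this lemma and is handled separately.
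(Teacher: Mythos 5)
Your proof is correct and is essentially the same computation as the paper's: both bound the quadratic correction term $\frac18\lceil n^2/2\rceil - \frac12(|S|-n/2)^2$ by evaluating it at the integer nearest $n/2$ (maximum) and at the endpoints $|S|\in\{0,n\}$ (minimum), obtaining $\pm\frac18\lfloor n^2/2\rfloor$. The only cosmetic difference is that you split by parity while the paper packages the parity bookkeeping into Iverson brackets $[n\text{ is odd}]$.
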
\beforeproof\begin{proof}
We have
\begin{eqnarray*}
f(S) - g_{f,\epsilon}(S) 
\le \epsilon  \left(\frac12 \left(\frac n2\right)^2 - \frac{n^2 + [n \text{ is
      odd}]}{16}\right)
= \epsilon  \left(\frac{n^2 - [n \text{ is odd}]}{16}\right) =  \frac{\epsilon}8 \left\lfloor\frac{n^2}{2}\right\rfloor,
\end{eqnarray*}
and
\begin{eqnarray*}
f(S)  - g_{f,\epsilon}(S) 
\ge \epsilon \left(\frac12 \left(\frac{\left[n \text{ is odd}\right]}2\right)^2 -
\frac{n^2 +  [n \text{ is odd}]}{16} \right)
= \epsilon\left(\frac{[n \text{ is odd}]-n^2}{16} \right)
= - \frac{\epsilon}8 \left\lfloor\frac{n^2}{2}\right\rfloor.
\end{eqnarray*}
It follows that $|f(S) - g_{f,\epsilon}(S)| \le \epsilon \cdot (1/8) \cdot \left\lfloor n^2/2
\right\rfloor \le \epsilon \cdot n^2/16$.
\end{proof}\afterproof
We next prove that if $f$ is
$\epsilon$-approximately $\mathcal{C}^{\cross}$-submodular, then the function $g_{f,\epsilon}$ is submodular.
\begin{lemma}
\label{lem:ub2}
  If $f$ is $\epsilon$-approximately $\mathcal{C}^{\cross}$-submodular, then the
  function $g_{f,\epsilon}$ is submodular.
\end{lemma}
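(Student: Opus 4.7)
The plan is to verify the submodularity of $g_{f,\epsilon}$ by checking only the cross constraints $\mathcal{C}^{\cross}$, since the excerpt has already noted that these constraints alone imply full submodularity. The additive constant $\frac{\epsilon}{8}\lceil n^2/2 \rceil$ cancels in any cross inequality, so effectively I only need to analyze how the quadratic correction term $q(S) := -\frac{\epsilon}{2}\bigl(|S| - n/2\bigr)^2$ behaves under the operation $g(A\cup\{a_1\}) + g(A\cup\{a_2\}) - g(A\cup\{a_1,a_2\}) - g(A)$.

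Concretely, the first step is to fix $A \subseteq [n]$ with $|A|=k$ and distinct $a_1, a_2 \notin A$, and compute the contribution of $q$ to that combination. Setting $t = k - n/2$, a direct expansion gives
\[
q(A\cup\{a_1\}) + q(A\cup\{a_2\}) - q(A\cup\{a_1,a_2\}) - q(A) = -\frac{\epsilon}{2}\bigl(2(t+1)^2 - (t+2)^2 - t^2\bigr),
\]
and the bracket simplifies to $-2$, so the $q$-contribution is exactly $+\epsilon$, independent of $k$ and $n$.

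The second step is to combine this with the hypothesis that $f$ is $\epsilon$-approximately $\mathcal{C}^{\cross}$-submodular, which guarantees
\[
f(A\cup\{a_1\}) + f(A\cup\{a_2\}) - f(A\cup\{a_1,a_2\}) - f(A) \ge -\epsilon.
\]
Adding the $+\epsilon$ from the quadratic correction exactly cancels the $-\epsilon$ slack, yielding a nonnegative quantity. Hence $g_{f,\epsilon}$ satisfies every constraint in $\mathcal{C}^{\cross}$ exactly, and by the minimality property of $\mathcal{C}^{\cross}$ recalled in Section~\ref{sec:prelim}, $g_{f,\epsilon}$ is submodular on all of $2^{[n]}$.

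There is no real obstacle here: the quadratic $(|S|-n/2)^2$ was designed precisely so that its discrete second difference along any two coordinates equals the constant $2$, and dividing by $2$ and negating produces exactly the $-\epsilon$ of slack needed. The only thing to be careful about is to compute the second mixed difference using a pair of distinct elements $a_1 \neq a_2$ (so that $|A\cup\{a_1,a_2\}| = k+2$), which matches the definition of $\mathcal{C}^{\cross}$; this is already built into the hypothesis.
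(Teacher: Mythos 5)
Your proof is correct and follows essentially the same route as the paper: both reduce to checking the $\mathcal{C}^{\cross}$ constraints and compute that the discrete second difference of the quadratic correction equals $-2$, contributing exactly $+\epsilon$ to cancel the $-\epsilon$ slack from $f$. The only cosmetic difference is your parametrization via $A$, $A\cup\{a_1\}$, $A\cup\{a_2\}$, $A\cup\{a_1,a_2\}$ instead of the paper's $|A|=|B|=c$, $|A\cap B|=c-1$, $|A\cup B|=c+1$.
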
\beforeproof
\begin{proof}
To prove the submodularity of $g_{f,\epsilon}$, it is sufficient to prove that for each pairs of sets $A,B$, such that there exists $c$ satisfying $c = |A| =
  |B|$, $|A \cap B| = c-1$ and $|A \cup B| = c+1$, it holds $g_{f,\epsilon}(A) + g_{f,\epsilon}(B) \ge g_{f,\epsilon}(A \cap B) + g_{f,\epsilon}(A \cup B)$.  That is, it is sufficient to prove that the $\mathcal{C}^{\cross}$ constraints --- the \emph{second-order derivatives} constraints --- are satisfied~\cite{Schrijver}. By the
  $\epsilon$-approximate $\mathcal{C}^{\cross}$-submodularity of $f$, we have
  $f(A) + f(B) - f(A \cap B) - f(A \cup B) \ge -\epsilon$. Then, using the
  definition of $g_{f,\epsilon}$, we have
\begin{align*}
& \quad g_{f,\epsilon}(A) + g_{f,\epsilon}(B) - g_{f,\epsilon}(A \cap B) - g_{f,\epsilon}(A \cup B) \\
&= f(A) + f(B) - f(A \cap B) - f(A \cup B)
- \frac{\epsilon}{2} \cdot \Big( (c - \frac{n}{2})^2 + (c - \frac{n}{2})^2 
- (c-1 - \frac{n}{2})^2 - (c+1- \frac{n}{2})^2 \Big) \\
&\ge -\epsilon - \frac{\epsilon}{2} \cdot \Big( 2 (c - \frac{n}{2})^2 - (c-1 - \frac{n}{2})^2 
- (c+1- \frac{n}{2})^2 \Big) \\
&= -\epsilon - \frac{\epsilon}{2} \cdot \Big( 2c^2 - (c-1)^2 - (c+1)^2 - n (2c-(c+1)-(c-1)) \Big) = -\epsilon + \epsilon = 0.
\end{align*}
Thus, the function $g_{f,\epsilon}$ is 
submodular.
\end{proof}\afterproof

Notice that through the use of the \emph{filter} $g_{f,\epsilon}$, we can adapt any algorithm
operating on submodular functions to work with the function $f$:
whenever the algorithm asks for the value associated to the generic
set $S$, we query $f(S)$ and return $g_{f,\epsilon}(S)$ according to its
definition. 

For instance, we could obtain in polynomial time an
$\left(\epsilon \cdot n^2/16 \right)$-additive approximation of the minimum value of
an $\epsilon$-approximately 
submodular function $f$, by
running any of the classical algorithms~\cite{c85,gls81} for
submodular minimization on the filter $g_{f,\epsilon}$. Or, assuming
that $f$ is uniformly larger than $\epsilon \cdot n^2/4$ (i.e., $f(S)
\ge \epsilon \cdot n^2/4$, $\forall S$), one can find the set $S$ whose value $O(1)$-multiplicatively-approximates the maximum
value of $f$, by using, e.g.,~\citet{bfss15} on 
$g_{f,\epsilon}$.

\smallskip

We point out that~\cite[Lemma 3.1, Lemma 3.2]{ib12} can be used to show an $O(n^2 \cdot \epsilon^{\dimin})$ upper bound on the distance  to submodularity.   Since $\epsilon^{\cross} \le \epsilon^{\dimin}$, our upper bound of $O(n^2 \cdot \epsilon^{\cross})$ is never worse.  In fact, there are simple functions for which $\epsilon^{\cross} \le \frac{\epsilon^{\dimin}}{n-1}$; for these functions, our bound is stronger by a factor $\Theta(n)$. Take, for instance, $f(A) = |A|^2$. The function $f$ has $\epsilon^{\cross} = 2$; indeed, the generic cross-derivative constraint reduces to $2(|A| + 1)^2 \ge (|A|+2)^2 + |A|^2 - \epsilon^{\cross}$, so that $\epsilon^{\cross} = 2 = (|A|+2)^2+|A|^2-2(|A|+1)^2$. On the other hand, for the same $f$, $\epsilon^{\dimin} \ge 2n-2$; indeed, the diminishing returns constraint for  $A = \varnothing, B = [n-1]$ and $c = n$, reduces to 
$1+(n-1)^2 \ge n^2 + 0 - \epsilon'$,  entailing $\epsilon' \ge 2n-2$.

\section{A lower bound}\label{sec:lb}

In this section we present a lower bound on the distance to submodularity of approximately $\mathcal{C}^{\full}$-submodular. By $\mathcal{C}^{\cross} \subseteq \mathcal{C}^{\dimin} \subseteq \mathcal{C}^{\full}$, our lower bound directly carries over to the other two sets of constraints (in Section~\ref{sec:other}, we present stronger lower bounds for these other sets of 
approximate submodularity constraints.)  Without loss of generality, we assume
$\epsilon = 1$.

The main technical novelty here is a lower bound on the
distance to submodularity of a special class of functions that are based on a
partition of the underlying set of items.

The first construction, which mostly serves illustrative purposes, is
easy to state. Consider a partition 
$[n] = S_1\cup \cdots\cup
S_{\sqrt{n}},$
where, say, $|S_{i}| = \sqrt{n}$ and the function
$
f(S) = \log_2\left(\max_{i \in [\sqrt{n}]} \left|S \cap S_i\right|\right),
$
for $|S| \ge 1$ and $f(\varnothing) = 0$. It
is not hard to show that the function $f$ is approximately
submodular --- our main technical lemma entails that this $f$ is $\Omega(\log n)$-far from any
submodular function.

Our strongest construction is based on a generalization of the proof that the
above function is $\Omega(\log n)$-far from any submodular function.
Here is a brief overview.  First, we show a general statement
(Lemma~\ref{lem:submoddist}) that can be used to lower bound the
distance to submodularity of an arbitrary ``block'' function, i.e., a
function whose value on a set $S$ depends on the sizes of the
intersections of $S$ with the parts of a partition of the ground set
$[n]$.  Next, we propose a specific partition into $k$ parts, and a
specific function $f_{k}$ (Definition~\ref{def:fkt}) that (i) is
approximately 
submodular (Lemma~\ref{lem:fkt1as}), and that (ii) pushes
the lower bound on the distance given by Lemma~\ref{lem:submoddist} to
$\sqrt{n / 8} -
O(1)$ (Lemma~\ref{lem:fkt:opt:lb}).

Before delving into the proofs, we give a brief intuition on the
function $f_k$.  This function is chosen to be very far from
submodularity while still being approximately submodular.  Loosely
speaking, $f_k$, on part of its domain, is the sum of two functions
that are very far from submodularity: if $S$ is the input set, then
the first function is the maximum of the intersection sizes 
between $S$ and the blocks\footnote{This function can be easily shown
  to be far from submodularity on the full domain.  Pick two blocks of
  sizes $c$ and $d$ where $2 \le c \le d$.  Let the set $S$ contain
  half of the first block and half of the second, and let the set $T$
  contain the same half of the first block and the other half of
  the second. Then, the max-sizes function can be easily shown to be
  off by $c/2$ on the $\{S,T\}$ submodular
  constraint.
} and the second function is the number of empty intersections of
$S$ and the blocks\footnote{For this function, suppose that there are
  $k$ blocks of size at least $2$ each plus possibly some other
  blocks of size $1$. Let $S$ be composed of one element from each of
  the $k$ blocks of size at least $2$. Let $T$ have the same property but let the two sets
  satisfy $S \cap T = \varnothing$. It is easy to observe that the
  empty-intersections function is off by $k$ on the $\{S,T\}$
  submodular
  constraint.
}. (Let us also mention the following interpretation of the two
functions. Suppose that, for an arbitrary set $S$, we let the vector
$x_S \in \Re^k$ contain in its $i$th position the size of the
intersection of $S$ and the $i$th block.  Then, the first function
evaluated at $S$ is equal to the $\ell_{\infty}$-norm of $x_S$, while
the second function at $S$ is equal to the opposite of the
$\ell_0$-norm of $x_S$, plus a fixed term.)

While these two functions are very far from submodularity on their full
domain, they are approximately submodular on a restricted domain: we
define the backbone class to contain a set if and only if it
intersects each block, bar at most one exception, in at most one
element.  The two functions above become approximately submodular on
the generic pair $\{S,T\}$ if $S \cup T$ belongs to the
backbone.

The function $f_k$ that we will use to prove the lower bound is
(essentially) equal to the average of the two functions mentioned
above, whenever the input set is part of the backbone; for other sets $S$, $f_k(S)$ will be chosen so to make it easy to prove that each approximately submodular constraint involving $S$ is satisfied.  We show that (i) $f_k$ is approximately submodular
everywhere and (ii) the constant submodularity error introduced in the
backbone  pushes $f_k$ to an $\Omega(\sqrt{n})$ distance to
submodularity.

We begin the technical part of this section by giving a formal
definition of block functions:
\begin{definition}[$(t_1,\ldots,t_k)$-block functions] 
  A function $f: 2^{[n]} \rightarrow \Re$ is a
  \emph{$(t_1,\ldots,t_k)$-block function} if there exists a partition of $[n]$ into blocks $S_1,\ldots,S_k$, with $|S_i| = t_i$ for $i \in [k]$, and  there exists a  function $F: \{0,1,\ldots, t_1\} \times \cdots \times \{0,1,\ldots,t_k\}
  \rightarrow \Re$ such that, for each $S \subseteq [n]$, it
  holds that
  \[
  f(S) = F(|S \cap S_1|, \ldots, |S \cap S_k|).
  \]
  We say that $F$ is the \emph{cardinality
    representation} of
  $f$. \label{def:submodkt} 
\end{definition}
We introduce a shortcut to represent the value of a cardinality
representation: we will use $F(\overbrace{1,\ldots,1}^{i-1}, t_i,
\overbrace{0, \ldots,0}^{k-i}),$ to denote the value of $F$ with an
input sequence composed of $i-1$ leading $1$'s, $k-i$ trailing $0$'s,
and a single $t_i$ in between; we will also use
$F(\overbrace{1,\ldots,1}^{k})  \text{ and }  F(\overbrace{0,\ldots,0}^k),$
to denote the values of $F$ with, respectively, an input composed of $k$ distinct $1$'s, and an input composed of $k$ distinct $0$'s.
\begin{lemma}\label{lem:submoddist}
  Let $f: 2^{\left[\sum_{i=1}^k t_i\right]} \rightarrow \Re$ be a $(t_1,\ldots,t_k)$-block function, and
  let $F$ be its cardinality
  representation. Then, for each submodular function $g: 2^{\left[\sum_{i=1}^kt_i\right]}
  \rightarrow \Re$, there exists $S \subseteq \left[\sum_{i=1}^kt_i\right]$ such that $|f(S) -
  g(S)| \ge \nu(f)$, where
\begin{align*}
\nu(f) &= \frac{\prod\limits_{j=1}^{k} \frac{t_j-1}{t_j}}2  \cdot  F(\overbrace{0,\compactldots,0}^k)+\sum_{i=1}^{k} \left( \frac{\prod\limits_{j=i+1}^{k}  \frac{t_j-1}{t_j}}{2 \cdot t_i}  \cdot  F(\overbrace{1,\compactldots,1}^{i-1}, t_{i}, \overbrace{0,\compactldots,0}^{k-i}) \right) 
 -\frac12 \cdot F(\overbrace{1,\compactldots,1}^k).
\end{align*}
\end{lemma}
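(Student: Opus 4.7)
The plan is a symmetrization-plus-telescoping argument. Given any submodular $g: 2^{[n]} \to \Re$, I would first average $g$ over the block-preserving permutation group: set $\bar g(S) = \avg_\pi g(\pi(S))$, where $\pi$ ranges over $\prod_{i=1}^k \sym(S_i)$. The function $\bar g$ is submodular (submodularity is preserved by ground-set permutations and by averaging) and block-symmetric, so it is itself a $(t_1, \ldots, t_k)$-block function with some cardinality representation $G$. Because $f$ is already block-symmetric, the triangle inequality gives $|f(S) - \bar g(S)| \le \avg_\pi |f(\pi(S)) - g(\pi(S))| \le \max_T |f(T) - g(T)|$, so it suffices to exhibit a tuple $\vec c = (c_1, \ldots, c_k)$ on which $|F(\vec c) - G(\vec c)| \ge \nu(f)$.

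Next I would extract the right family of submodular inequalities on $G$. Applying submodularity to pairs of sets differing in two elements of a single block $S_i$ shows that, for each $i$ and each fixed assignment of the remaining coordinates, the map $c_i \mapsto G(\vec c)$ is (discretely) concave on $\{0, 1, \ldots, t_i\}$. Evaluated at the ``staircase'' configuration in which the first $i-1$ coordinates equal $1$, the last $k-i$ equal $0$, and $c_i \in \{0, 1, t_i\}$, the endpoint chord inequality yields
\[
h_i \ \ge\ \frac{t_i - 1}{t_i}\, h_{i-1} \ +\ \frac{1}{t_i}\, G\bigl(\overbrace{1,\ldots,1}^{i-1}, t_i, \overbrace{0,\ldots,0}^{k-i}\bigr),
\]
where $h_i := G(\overbrace{1,\ldots,1}^i, \overbrace{0,\ldots,0}^{k-i})$, so $h_0 = G(\overbrace{0,\ldots,0}^k)$ and $h_k = G(\overbrace{1,\ldots,1}^k)$. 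Iterating the recursion from $i = k$ down to $i = 1$ telescopes to
\[
G\bigl(\overbrace{1,\ldots,1}^k\bigr) \ \ge\ \biggl(\prod_{j=1}^k \frac{t_j-1}{t_j}\biggr) G\bigl(\overbrace{0,\ldots,0}^k\bigr) \ +\ \sum_{i=1}^k \frac{\prod_{j=i+1}^k \frac{t_j-1}{t_j}}{t_i}\, G\bigl(\overbrace{1,\ldots,1}^{i-1}, t_i, \overbrace{0,\ldots,0}^{k-i}\bigr),
\]
and a direct telescoping check shows that the coefficients of $G$ on the right-hand side are nonnegative and sum to $1$.

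To finish, I would let $\delta := \max_T |f(T) - \bar g(T)|$ and exploit that the right-hand coefficients above are nonnegative and sum to $1$: substituting $G(\vec c) \ge F(\vec c) - \delta$ on the right and $G(\overbrace{1,\ldots,1}^k) \le F(\overbrace{1,\ldots,1}^k) + \delta$ on the left, then rearranging, gives $2\delta \ge 2\nu(f)$, i.e., $\delta \ge \nu(f)$, which is exactly the conclusion sought. The main technical obstacle I anticipate is spotting the correct telescoping family: the coefficients in the definition of $\nu(f)$ look opaque at first, but reading off the ``staircase'' chain $h_0, \ldots, h_k$ together with the ``full-block'' tuples appearing in the statement strongly suggests the chord inequalities above, after which the argument closes essentially on its own.
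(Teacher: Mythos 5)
Your proof is correct and is essentially the paper's argument in a cleaner packaging: your symmetrized values $h_s = G(1,\ldots,1,0,\ldots,0)$ (with $s$ ones) coincide with the paper's sums $\sigma_s$ divided by $\prod_{j \le s} t_j$, your chord inequality is exactly the averaged form of the paper's expansion inequality $\sum_{x \in S_s} g(A \cup \{x\}) \ge g(A \cup S_s) + (t_s - 1)\, g(A)$, and the downward telescoping together with the observation that the resulting coefficients are nonnegative and sum to $1$ is identical to the paper's chaining step. The symmetrization is a nice reformulation (it turns the bookkeeping over ``rainbow'' sets into a univariate concavity statement per coordinate), but the underlying inequalities and the final $2\delta \ge 2\nu(f)$ accounting are the same.
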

\begin{proof}
  Given any submodular function $g: 2^{[n]} \rightarrow \Re$, let $X =
  X(g) = \max_{S \subseteq [n]} \left|f(S) - g(S)\right|$. Therefore,
  for each $S \subseteq [n]$, we have $-X \le f(S) - g(S) \le X.$ We
  will prove that $X \ge \nu(f)$; the main claim will then follow.

Suppose that the blocks of the $(t_1,\ldots,t_k)$-block function $f$ are $S_1,\ldots,S_k$, with $|S_i| = t_i$ for $i \in [k]$. Define the following sums for $s =1, \ldots, k$:
$$\sigma_s = \underbrace{\sum_{i_1 \in S_1} \sum_{i_2 \in S_2} \cdots \sum_{i_s \in S_s}}_s g\left(\left\{i_1,i_2,\ldots, i_s\right\}\right),$$
and let $\sigma_0 = g(\varnothing)$.
Observe that 
\begin{align*}
\sigma_k & \le \sum_{i_1 \in S_1} \sum_{i_2 \in S_2} \cdots \sum_{i_k \in S_k} \left(f\left(\left\{i_1,i_2,\ldots, i_k\right\}\right) + X\right) \\
&= \sum_{i_1 \in S_1} \sum_{i_2 \in S_2} \cdots \sum_{i_k \in S_k}  \left(F(\overbrace{1,\ldots,1}^k) + X\right) = \left(\prod_{j=1}^k t_j\right) \cdot \left(F(\overbrace{1,\ldots,1}^k)  + X\right),
\end{align*}
where the inequality follows from the definition of $X$, and the last equation follows from $|S_i| = t_i$ for $i \in [k]$.

For each $s\ge 1$, $\sigma_s$ satisfies:
\begin{align*}
\sigma_s &= \sum_{i_1 \in S_1} \sum_{i_2 \in S_2} \cdots \sum_{i_{s-1} \in S_{s-1}} \left(\sum_{i_s \in S_s} g\left(\left\{i_1,i_2,\ldots, i_{s-1}\right\} \cup \left\{i_s\right\}\right)\right)\\
&\ge\sum_{i_1 \in S_1} \sum_{i_2 \in S_2} \cdots \sum_{i_{s-1} \in S_{s-1}} \left(g\left(\left\{i_1,i_2,\ldots, i_{s-1}\right\} \cup S_s\right) + (t_{s}-1) \cdot g\left(\left\{i_1,i_2,\ldots, i_{s-1}\right\}\right)\right)\\
&\ge\sum_{i_1 \in S_1} \sum_{i_2 \in S_2} \cdots \sum_{i_{s-1} \in S_{s-1}} \left(f\left(\left\{i_1,i_2,\ldots, i_{s-1}\right\} \cup S_s\right) -X + (t_s-1) \cdot g\left(\left\{i_1,i_2,\ldots, i_{s-1}\right\}\right)\right)\\
&=\sum_{i_1 \in S_1} \sum_{i_2 \in S_2} \cdots \sum_{i_{s-1} \in S_{s-1}} \left(F(\overbrace{1,\ldots,1}^{s-1}, t_s, \overbrace{0,\ldots,0}^{k-s}) -X + (t_s-1) \cdot g\left(\left\{i_1,i_2,\ldots, i_{s-1}\right\}\right)\right)\\
&=\left(\prod_{j=1}^{s-1} t_j\right) \cdot \left(F(\overbrace{1,\ldots,1}^{s-1}, t_s, \overbrace{0,\ldots,0}^{k-s}) -X\right) + \left(t_s-1\right) \sum_{i_1 \in S_1} \sum_{i_2 \in S_2} \cdots \sum_{i_{s-1} \in S_{s-1}} g\left(\left\{i_1,i_2,\ldots, i_{s-1}\right\}\right)\\
&=\left(\prod_{j=1}^{s-1} t_j\right) \cdot \left(F(\overbrace{1,\ldots,1}^{s-1}, t_s, \overbrace{0,\ldots,0}^{k-s}) -X\right) + \left(t_s-1\right) \sigma_{s-1}.
\end{align*}
By expanding this sum we get:
\begin{align*}
\sigma_k &\ge \sum_{i=0}^{k-1} \left(\left(\prod_{j=1}^{i} t_j\right) \left(\prod_{j=i+2}^{k} (t_j-1)\right) \cdot \left(F(\overbrace{1,\ldots,1}^i, t_{i+1}, \overbrace{0,\ldots,0}^{k-1-i}) -X\right)\right)  + \left(\prod_{j=1}^{k} (t_j-1)\right)  g(\varnothing)\\
&\ge \sum_{i=0}^{k-1} \left(\left(\prod_{j=1}^{i} t_j\right) \left(\prod_{j=i+2}^{k} (t_j-1)\right) \cdot \left(F(\overbrace{1,\ldots,1}^i, t_{i+1}, \overbrace{0,\ldots,0}^{k-1-i}) -X\right)\right) \\
& \quad + \left(\prod_{j=1}^{t} (t_j-1)\right)  \left(F(\overbrace{0,\ldots,0}^k) - X\right).\end{align*}
By chaining this lower bound on $\sigma_k$ with the previously proven upper bound on $\sigma_k$, we get:
\begin{align*}
\left(\prod_{j=1}^k t_j\right)  \left(F(\overbrace{1,\ldots,1}^k) + X\right) &\ge \sum_{i=0}^{k-1} \left(\left(\prod_{j=1}^{i} t_j\right) \left(\prod_{j=i+2}^{k} (t_j-1)\right)  \left(F(\overbrace{1,\ldots,1}^i, t_{i+1}, \overbrace{0,\ldots,0}^{k-1-i}) -X\right)\right) + \\
&\quad \left(\prod_{j=1}^{k} (t_j-1)\right)  \left(F(\overbrace{0,\ldots,0}^k) - X\right)\\
F(\overbrace{1,\ldots,1}^k) + X &\ge \sum_{i=0}^{k-1} \left( \left(\frac1{t_{i+1}} \cdot \prod_{j=i+2}^{k} \left(1 - \frac1{t_j}\right)\right) \cdot \left(F(\overbrace{1,\ldots,1}^i, k_{i+1}, \overbrace{0,\ldots,0}^{t-1-i}) -X\right)\right) +\\
&\quad  \left(\prod_{j=1}^{k} \left(1 - \frac1{t_j}\right)\right)  \left(F(\overbrace{0,\ldots,0}^k) - X\right).
\end{align*}
We move the $X$ terms to the LHS, and the $F(1,\ldots,1)$ term to the RHS, to get:
\begin{align*}
& \left(1 +  \prod_{j=1}^{k} \left(1 - \frac1{t_j}\right) + \sum_{i=0}^{k-1}\left(\frac1{t_{i+1}} \cdot \prod_{j=i+2}^{k} \left(1 - \frac1{t_j}\right)\right)\right) \cdot X \ge \\
&\quad \sum_{i=0}^{k-1} \left( \left(\frac1{t_{i+1}} \cdot \prod_{j=i+2}^{k} \left(1 - \frac1{t_j}\right)\right) \cdot F(\overbrace{1,\ldots,1}^i, t_{i+1}, \overbrace{0,\ldots,0}^{k-1-i}) \right) \\
&\quad\quad + \left(\prod_{j=1}^{k} \left(1 - \frac1{t_j}\right)\right)  F(\overbrace{0,\ldots,0}^k) -F(\overbrace{1,\ldots,1}^k).\end{align*}
Now, if we define $p_j = 1 - \frac1{t_j}$, we have $0 \le p_j \le 1$. Suppose that we flip the mutually independent coins $C_k,C_{k-1}, \ldots, C_1$ in order, stopping when we either get tails, or right after having flipped coin $C_1$. Let $p_j$ be the heads probability of coin $C_j$. Then, the probability of stopping right after having flipped the coin of index $i$, $2 \le i \le k$, is equal to $\left(1-p_i\right) \cdot \prod_{j=i+1}^k p_j$. The probability of stopping after having flipped the coin of index $1$ is, instead, $\left(1-p_1\right) \cdot \prod_{j=2}^k p_j + \prod_{j=1}^k p_j$. Since these events partition the probability space, we have $$\prod_{j=1}^k p_j+\sum_{i=1}^k \left(\left(1-p_i\right) \cdot \prod_{j=i+1}^k p_j\right)  = 1.$$
Going back to our inequality, we then have $$\prod_{j=1}^{k} \left(1 - \frac1{t_j}\right)  + \sum_{i=0}^{k-1}\left(\frac1{t_{i+1}} \cdot \prod_{j=i+2}^{k} \left(1 - \frac1{t_j}\right)\right) = 1.$$
The inequality then reduces to
\begin{align*}
2  X & \ge 
 \sum_{i=0}^{k-1} \left( \left(\frac1{t_{i+1}} \cdot \prod_{j=i+2}^{k} \left(1 - \frac1{t_j}\right)\right) \cdot F(\overbrace{1,\ldots,1}^i, t_{i+1}, \overbrace{0,\ldots,0}^{k-1-i}) \right) \\
 & \quad + \left(\prod_{j=1}^{k} \left(1 - \frac1{t_j}\right)\right)  F(\overbrace{0,\ldots,0}^k) -F(\overbrace{1,\ldots,1}^k),
\end{align*}
and the proof is concluded.\end{proof}
We observe that Lemma~\ref{lem:submoddist} directly gives a lower bound of $\Omega(\log n)$ on the distance  to submodularity of the $(\sqrt{n},\ldots,\sqrt{n})$-block function $f(S)
= \log_2 \left(\max_{i \in [\sqrt{n}]} \left|S \cap S_i\right|\right)$ for
$|S| \ge 1$, and $f(\varnothing) = 0$.

Our main approximately $\mathcal{C}^{\full}$-submodular block
function $f_k$, instead, will be a function over $k$ blocks of sizes
$2,3,\ldots,k+1$. We will prove that its distance to submodularity is
at least $\nu(f_{k}) = \frac{k}4$.  As our next step, we consider a special case of Lemma~\ref{lem:submoddist}
dealing  with $(2,3,\ldots,k+1)$-block functions.
\begin{corollary}\label{cor:submoddist}
  Let $f: 2^{[n_k]}
  \rightarrow \Re$ with $n_k = \frac{(k+3)k}2$ be a $(2,3,\ldots,k+1)$-block function and
  let $F$ be its cardinality
  representation. Then, for each submodular function $g: 2^{[n_k]}
  \rightarrow \Re$, there exists $S \subseteq \left[n_k\right]$ such that $|f(S) -
  g(S)| \ge \nu(f)$, where
  \begin{align*}
\nu(f) &=  \frac{F(\overbrace{0,\ldots,0}^k) + \sum_{i=1}^{k}  F(\overbrace{1,\ldots,1}^{i-1}, i+1, \overbrace{0,\ldots,0}^{k-i})}{2\cdot(k+1)} 
- \frac{F(\overbrace{1,\ldots,1}^k)}2.
\end{align*}
\end{corollary}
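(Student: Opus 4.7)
The plan is to deduce the corollary as a direct specialization of Lemma~\ref{lem:submoddist} to the case $t_i = i+1$ for $i \in [k]$; the ground set then has size $\sum_{i=1}^{k}(i+1) = \frac{(k+3)k}{2} = n_k$, matching the corollary's hypothesis. The only real work is simplifying the two kinds of products appearing in the expression for $\nu(f)$, which turn out to telescope.

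First I would record the ratios $\frac{t_j - 1}{t_j} = \frac{j}{j+1}$. From this one immediately obtains the telescoping identity
\[
\prod_{j=i+1}^{k} \frac{t_j-1}{t_j} \;=\; \prod_{j=i+1}^{k} \frac{j}{j+1} \;=\; \frac{i+1}{k+1},
\]
valid for all $0 \le i \le k-1$ (with the empty product for $i=k$ giving $1$, consistently read as $\frac{k+1}{k+1}$). In particular, setting $i=0$ yields $\prod_{j=1}^{k}\frac{t_j-1}{t_j} = \frac{1}{k+1}$, and this handles the $F(\overbrace{0,\ldots,0}^k)$ coefficient: it becomes $\frac{1}{2(k+1)}$.

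Next I would compute the coefficient of each $F(\overbrace{1,\ldots,1}^{i-1}, t_i, \overbrace{0,\ldots,0}^{k-i})$ term. By Lemma~\ref{lem:submoddist} this coefficient equals
\[
\frac{1}{2\,t_i} \cdot \prod_{j=i+1}^{k} \frac{t_j - 1}{t_j} \;=\; \frac{1}{2(i+1)} \cdot \frac{i+1}{k+1} \;=\; \frac{1}{2(k+1)}.
\]
Thus every summand in $\sum_{i=1}^{k}$ acquires the uniform coefficient $\frac{1}{2(k+1)}$, which is exactly the coefficient obtained for $F(\overbrace{0,\ldots,0}^k)$. Collecting these into a single sum and retaining the $-\frac{1}{2} F(\overbrace{1,\ldots,1}^k)$ term from Lemma~\ref{lem:submoddist} unchanged produces the expression claimed for $\nu(f)$.

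There is no real obstacle here beyond careful bookkeeping of the two telescoping products and checking the boundary cases (the empty product for $i=k$, and the $i=0$ case which matches the coefficient of $F(\overbrace{0,\ldots,0}^k)$). The assertion that $|f(S)-g(S)| \ge \nu(f)$ for some $S$ is inherited from Lemma~\ref{lem:submoddist} and requires no additional argument.
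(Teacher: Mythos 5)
Your proposal is correct and follows essentially the same route as the paper: specialize Lemma~\ref{lem:submoddist} to $t_i = i+1$, telescope $\prod_{j=i+1}^{k}\frac{j}{j+1} = \frac{i+1}{k+1}$, and observe that every coefficient collapses to $\frac{1}{2(k+1)}$. Nothing is missing.
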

\begin{proof}
We apply Lemma~7, with $t_i = i + 1$, for $i = 1, \ldots, k$, to get  that 
\begin{align*}
2 \nu(f) &=  \sum_{i=1}^{k} \left( \frac{\prod_{j=i+1}^{k} \left(1 -
      \frac1{t_j}\right)}{t_i} \cdot F(\overbrace{1,\ldots,1}^{i-1},
  t_{i}, \overbrace{0,\ldots,0}^{k-i}) \right)  
  +
\left(\prod_{j=1}^{k} \left(1 - \frac1{t_j}\right)\right)
F(\overbrace{0,\ldots,0}^k) -F(\overbrace{1,\ldots,1}^k) \\
& = \sum_{i=1}^{k} \left( \frac{\prod_{j=i+1}^{k}
    \frac{t_j-1}{t_j}}{t_i} \cdot F(\overbrace{1,\ldots,1}^{i-1},
  t_{i}, \overbrace{0,\ldots,0}^{k-i}) \right)  
  +
\left(\prod_{j=1}^{k} \frac{t_j-1}{t_j}\right)
F(\overbrace{0,\ldots,0}^k) -F(\overbrace{1,\ldots,1}^k).
\end{align*}
We now substitute the values of $t_i$ and $t_j$, so to get:
\begin{align*}
2 \nu(f)& = \sum_{i=1}^{k} \left( \frac{\prod_{j=i+1}^{k} \frac{j}{j+1}}{i+1}
  \cdot F(\overbrace{1,\ldots,1}^{i-1}, i+1,
  \overbrace{0,\ldots,0}^{k-i}) \right) 
  + \left(\prod_{j=1}^{k}
  \frac{j}{j+1}\right)  F(\overbrace{0,\ldots,0}^k)
-F(\overbrace{1,\ldots,1}^k) \\
& = \sum_{i=1}^{k} \left( \frac{\frac{i+1}{k+1}}{i+1} \cdot
  F(\overbrace{1,\ldots,1}^{i-1}, i+1, \overbrace{0,\ldots,0}^{k-i})
\right)
+ \frac1{k+1} \cdot F(\overbrace{0,\ldots,0}^k)
-F(\overbrace{1,\ldots,1}^k).
\end{align*}
Thus, finally,
\begin{align*}
2\nu(f) &=  \frac{\sum_{i=1}^{k}  F(\overbrace{1,\ldots,1}^{i-1}, i+1, \overbrace{0,\ldots,0}^{k-i}) + F(\overbrace{0,\ldots,0}^k)}{k+1} 
- F(\overbrace{1,\ldots,1}^k).\qedhere
\end{align*}
\end{proof}
We now describe our approximately 
submodular block function $f_k$.
\begin{definition}\label{def:fkt}
  Let $k \ge 1$ be an integer and let $n_k = \frac{(k+3)
    k}2$. Consider a partition of $[n_k]$ into $k$ pairwise disjoint
  blocks $S_1, \ldots, S_{k}$, satisfying $|S_i| = t_i=i+1$ for
  $i=1,\ldots, k$.  Let the {\em backbone} $\mathcal{B}$ of the
  partition $S_1,\ldots, S_{k}$ be the class of sets
  \begin{align*}
\mathcal{B} &= \left\{S \mid S \subseteq [n_k] \text{ and  at most one $i \in [k]$}\right.
 \left.\text{satisfies } |S \cap S_i| \ge 2\right\}.
\end{align*}
Given $S \subseteq [n_k]$, let $Z_S =
  \sum_{i=1}^{k} \left[ S \cap S_i = \varnothing \right]$, and let $M_S
  = \max_{i =1}^k | S \cap S_i |$.
The function $f_{k}: 2^{[n_k]} \rightarrow \Re$
  is defined to be:
  $$f_{k}(S) = \left\{\begin{array}{ll}
\frac{M_S + Z_S - [S \ne \varnothing]}2 & \text{if } S \in \mathcal{B},\\
- (k+2)^{|S|} & \text{otherwise.}\\
\end{array}\right.$$
\end{definition}
Observe that $f_{k}(S)$ is a function of the sequence $\left(|S \cap
  S_1|, |S \cap S_2|, \ldots, |S \cap S_{k}|\right)$: it follows that
$f_{k}$ is a $(2,3,\ldots,k+1)$-block function. The non-negative part
of $f_{k}$ (i.e., $f_k$ restricted to its backbone $\mathcal{B}$) can
be interpreted as the ``structure'' of the function. We will apply
Corollary~\ref{cor:submoddist} to that part in order to prove the lower
bound on the distance of $f_{k}$ to submodularity. The negative part
of $f_{k}$ is irrelevant for the lower bound; it was chosen in order to simplify the proof of $f_{k}$'s 
approximate
submodularity.
\begin{lemma}\label{lem:fkt1as}
  The function $f_{k}$ is approximately
  $\mathcal{C}^{\full}$-submodular.
\end{lemma}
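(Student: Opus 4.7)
My plan splits on whether $A \cup B$ lies in the backbone $\mathcal{B}$. The first observation is that $\mathcal{B}$ is downward-closed under $\subseteq$: at most one block can be "saturated" (hit in $\ge 2$ places) by a set, and this property is preserved under taking subsets. Hence $A \cup B \in \mathcal{B}$ implies $A, B, A \cap B \in \mathcal{B}$ as well, and then all four values evaluated by $f_k$ land on the top branch of the definition.

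Case 1 ($A \cup B \in \mathcal{B}$). Multiplying the target inequality $f_k(A) + f_k(B) \ge f_k(A \cup B) + f_k(A \cap B) - 1$ by $2$ reduces it to $\Delta_M + \Delta_Z \ge \delta - 2$, where $\Delta_M = M_A + M_B - M_{A \cup B} - M_{A \cap B}$, $\Delta_Z = Z_A + Z_B - Z_{A \cup B} - Z_{A \cap B}$, and $\delta = [A \neq \varnothing] + [B \neq \varnothing] - [A \cup B \neq \varnothing] - [A \cap B \neq \varnothing] \in \{0, 1\}$. A per-block computation gives $\Delta_Z = -|\{i : A \cap S_i \ne \varnothing,\ B \cap S_i \ne \varnothing,\ A \cap B \cap S_i = \varnothing\}|$. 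The backbone hypothesis forces $A \cap S_j = B \cap S_j$ at every block $j$ with $|(A \cup B) \cap S_j| \le 1$, so the counting set is supported only at the at-most-one "large" block, giving $\Delta_Z \in \{0, -1\}$. A similar block-level inspection of $M_S = \max_i |S \cap S_i|$, parameterising the large block $i_0$ by $a = |A \cap S_{i_0}|,\ b = |B \cap S_{i_0}|,\ c = |A \cap B \cap S_{i_0}|$ and by indicators $\alpha, \beta, \gamma$ for whether $A, B, A \cap B$ meet some block $\ne i_0$, yields $\Delta_M \in \{-1, 0, 1\}$; in particular, when $a, b \ge 1$ and $c = 0$ one has $\Delta_M = -\gamma$. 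Tabulating the few sub-configurations, the joint minimum $\Delta_M + \Delta_Z = -2$ occurs only when $\delta = 0$ (the corresponding configurations all satisfy $A \cap B \ne \varnothing$), whereas $\delta = 1$ forces $A \cap B = \varnothing$ and hence $\gamma = 0$, giving $\Delta_M + \Delta_Z \ge -1$; in every sub-configuration $\Delta_M + \Delta_Z \ge \delta - 2$.

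Case 2 ($A \cup B \notin \mathcal{B}$). Now $|A \cup B| \ge 2$ and $f_k(A \cup B) = -(k+2)^{|A \cup B|}$. If $A$ or $B$ equals $A \cup B$ the inequality is immediate, so assume $|A|, |B| \le |A \cup B| - 1$. Using the two uniform bounds $f_k(S) \le k + 1/2$ and $f_k(S) \ge -(k+2)^{|S|}$, both immediate from the two branches of the definition, gives
\[
f_k(A) + f_k(B) - f_k(A \cup B) - f_k(A \cap B) + 1 \;\ge\; -2(k+2)^{|A \cup B| - 1} + (k+2)^{|A \cup B|} - (k + 1/2) + 1 \;=\; k(k+2)^{|A \cup B|-1} - k + \tfrac{1}{2} \;\ge\; 0,
\]
since $(k+2)^{|A \cup B|-1} \ge k+2$ when $|A \cup B| \ge 2$.

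The main obstacle is the bookkeeping in Case 1, where both $\Delta_M$ and $\Delta_Z$ can individually drop to $-1$, and one must verify that whenever they jointly reach $-2$, the quantity $\delta$ is also $0$. The indicator correction $-[S \ne \varnothing]/2$ built into $f_k$ on the backbone is calibrated precisely to make this alignment work at the $A \cap B = \varnothing$ boundary, yielding exactly the additive slack $\epsilon = 1$ and no more. Case 2 is comparatively routine because the exponential branch $-(k+2)^{|S|}$ dwarfs every other term.
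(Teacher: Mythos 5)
Your proposal is correct and follows essentially the same route as the paper: split on whether $A \cup B$ lies in the downward-closed backbone, handle the backbone case via the per-block behaviour of $Z$ and $M$ (your $\Delta_Z \ge -1$ and $\Delta_M \ge -[A\cap B \ne \varnothing]$ alignment with $\delta$ is exactly the paper's pair of inequalities, just tabulated jointly), and let the exponential branch absorb everything when $A \cup B \notin \mathcal{B}$. One intermediate claim is stated too strongly --- a block $j$ with $|(A\cup B)\cap S_j| = 1$ need not satisfy $A\cap S_j = B\cap S_j$ (one side may be empty) --- but such a block cannot contribute to your counting set anyway, so the conclusion $\Delta_Z \in \{0,-1\}$ and the rest of the argument stand.
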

\begin{proof}
We aim to prove that, for each $\{A,B\} \subseteq [n]$, such that $A \not\subseteq B$ and $B \not\subseteq A$, it
holds $f_k(A) + f_k(B) \ge f_k(A \cup B) + f_k(A \cap B) - 1$. Observe
that, then, $A,B \ne \varnothing$
and $A \cup B \ne \varnothing$. We consider two cases.

\smallskip

First, we assume that $A \cup B \in \mathcal{B}$. Then, since
  for each $S \in \mathcal{B}$ and for each $T \subseteq S$ we have $T
  \in \mathcal{B}$, it must be that $A, B, A \cap B \in \mathcal{B}$,
  as well.  We now have,
$$2\cdot \left(f_k(A) + f_k(B)\right) 
= M_A + M_B + Z_A + Z_B - [A \ne \varnothing] - [B \ne \varnothing]
= M_A + M_B + Z_A + Z_B - 2.$$
Moreover,
$$2 \cdot f_k(A \cup B) = M_{A \cup B} + Z_{A \cup B} - [A \cup B
\ne \varnothing] 
= M_{A \cup B} + Z_{A \cup B} - 1,$$
and
$$2 \cdot f_k(A \cap B) = M_{A \cap B} + Z_{A \cap B} - [A \cap B \ne \varnothing].$$

Thus,
$$2\cdot \left(f_k(A \cup B) + f_k(A \cap B)\right) 
= M_{A \cup B} + M_{A \cap B} + Z_{A \cup B} + Z_{A \cap B} -1 - [A \cap B \ne \varnothing].$$

Proving that $f_k(A) + f_k(B) \ge f_k(A \cup B) + f_k(A \cap B) - 1$
is equivalent to proving that $2\cdot \left(f_k(A) + f_k(B)\right) \ge
2\cdot \left(f_k(A \cup B) + f_k(A \cap B)\right) - 2$ which, after
substituting the values of $2\cdot \left(f_k(A) + f_k(B)\right)$ and
$2\cdot \left(f_k(A \cup B) + f_k(A \cap B)\right)$, becomes:
$$\left(M_A + M_B + Z_A + Z_B - 2\right) \ge  \left(M_{A \cup B} + M_{A \cap B} + Z_{A \cup B} + Z_{A \cap B} -1 - [A \cap B \ne \varnothing]\right) - 2,$$
or, equivalently,
$$\left(M_A + M_B - M_{A\cup B} - M_{A \cap B} + [A \cap B \ne \varnothing]\right) + \left(Z_A + Z_B - Z_{A \cup B} - Z_{A \cap B} +1\right) \ge 0.$$
We will prove the latter inequality by proving the two inequalities
$Z_A + Z_B \ge Z_{A \cup B} + Z_{A \cap B} - 1$ and $M_A + M_B \ge
M_{A \cup B} + M_{A \cap B} - [A \cap B \ne \varnothing]$:
\begin{itemize}
\item First, we consider the $Z$'s inequality. We will show that there
  exists at most one block $S_j$ where the equation
\begin{align*}
    &\quad \left[A \cap S_j
    = \varnothing\right]+\left[B \cap S_j = \varnothing\right] 
    =
  \left[(A \cup B) \cap S_j = \varnothing\right]+\left[(A \cap B) \cap
    S_j = \varnothing\right] 
\end{align*}
does not hold and, in that block, the inequality 
\begin{align*}
&\quad \left[A \cap S_j = \varnothing\right]+\left[B \cap S_j
    = \varnothing\right] 
    \ge \left[(A \cup B) \cap S_j =
    \varnothing\right]+\left[(A \cap B) \cap S_j = \varnothing\right]
  - 1
  \end{align*}
must hold. These two observations imply that $Z_A + Z_B \ge
  Z_{A \cup B} + Z_{A \cap B} - 1$.

  Consider any block $S_j$.  We first prove that if $|(A \cup B) \cap
  S_j| \le 1$, then the equation $\left[A \cap S_j =
    \varnothing\right]+\left[B \cap S_j = \varnothing\right] =
  \left[(A \cup B) \cap S_j = \varnothing\right]+\left[(A \cap B) \cap
    S_j = \varnothing\right]$ holds.  If $|(A \cup B) \cap
  S_j| = 0$, then the equation holds trivially. If
  $|(A \cup B) \cap S_j| = 1$, then either (i) $|(A \cap B) \cap
  S_j| = 1$, in which case $A \cap S_j = B \cap S_j = (A \cup B)
  \cap S_j = (A \cap B) \cap S_j$, so the equality holds or (ii) $|(A \cap B) \cap S_j| = 0$, in which case $\{|A \cap
  S_j|, |B \cap S_j|\} = \{0,1\} = \{|(A \cup B) \cap S_j|, |(A \cap
  B) \cap S_j|\}$, so the equation holds yet again.

 Second, we consider the case where $|(A \cup B) \cap S_j| \ge 2$.
Observe that by $A \cup B \in \mathcal{B}$, there can exist at most
one $j \in [k]$ for which the latter inequality holds. For such a $j$,
we have
\begin{align*}
\left[A \cap S_j = \varnothing\right]+\left[B \cap S_j = \varnothing\right] 
& \ge 0
\ge \left[(A \cap B) \cap S_j = \varnothing\right] -1 \\
& = \left[(A \cup B) \cap S_j = \varnothing\right]+\left[(A \cap B) \cap S_j = \varnothing\right] -1.
\end{align*}

\item Second, we consider the $M$'s inequality. Let $u$ be such that
  $|(A \cup B) \cap S_u| = M_{A \cup B}$ and let $i$ be such that
  $|(A \cap B) \cap S_i| = M_{A \cap B}$. We have that $M_{A \cup B}
  = |A \cap S_u| + |B \cap S_u| - |(A \cap B) \cap S_u| \le M_A + M_B
  - |(A \cap B) \cap S_u|$.

  Now, if $M_{A \cap B} \ge 2$, it must hold $i = u$ by $A \cup B \in
  \mathcal{B}$. Thus, if $M_{A \cap B} \ge 2$, we have
  \begin{align*}
  M_A + M_B \ge M_{A \cup B} + |(A \cap B) \cap S_u| 
  = M_{A \cup B} + |(A \cap B) \cap S_i|
  = M_{A \cup B} + M_{A \cap B}.
  \end{align*}
Otherwise, $M_{A \cap B} \le 1$. In this case, we have $M_{A \cap B} =
\left[A \cap B \ne \varnothing\right]$. Thus, \begin{align*}
  M_A + M_B \ge M_{A \cup B} + |(A \cap B) \cap S_u|
  \ge M_{A \cup B}
  = M_{A \cup B} + M_{A \cap B} - [A \cap B \ne \varnothing].
\end{align*} 
\end{itemize}

Therefore, the claim has been proved for each $A,B$ such  that 
$A \cup B \in \mathcal{B}$.

\medskip

We next consider the case $A \cup B \not\in \mathcal{B}$. Observe
that, in general, $-(k+2)^{|S|} \le f_k(S) \le k$. Thus, $f_k(A) +
f_k(B) \ge -2 \cdot (k+2)^{\max(|A|, |B|)}$. Moreover, $f_k(A \cap B)
\le k$ and $f_k(A \cup B) = -(k+2)^{|A \cup B|} \le
-(k+2)^{\max(|A|, |B|)+1}$. Then,
\begin{align*}
  f_k(A) + f_k(B)
  &\ge -2 \cdot (k+2)^{\max(|A|, |B|)} \\
  &= -(k+2)^{\max(|A|,|B|)+1} 
  + \left(   (k+2)^{\max(|A|, |B|)+1}  -2 \cdot (k+2)^{\max(|A|, |B|)} \right)\\
  &\ge f_k(A \cup B)
  +\left(   (k+2)^{\max(|A|, |B|)+1}  -2 \cdot (k+2)^{\max(|A|, |B|)} \right)\\
  &= f_k(A \cup B) + (k+2-2) \cdot  (k+2)^{\max(|A|, |B|)} \\
  &\ge f_k(A \cup B) + k \ge f_k(A \cup B) + f_k(A \cap B).\qedhere
\end{align*}
\end{proof}
The following Lemma proves the claimed lower bound on the distance to
submodularity of $f_{k}$.
\begin{lemma}\label{lem:fkt:opt:lb}
  The distance of $f_k$ to submodularity is at least $\nu(f_{k}) =
  \frac{k}4$.
\end{lemma}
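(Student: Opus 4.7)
The plan is to apply Corollary~\ref{cor:submoddist} directly to $f_k$. Since $f_k$ is a $(2,3,\ldots,k+1)$-block function (as already noted right after Definition~\ref{def:fkt}), Corollary~\ref{cor:submoddist} yields a lower bound on the distance to submodularity in terms of the cardinality representation $F$ evaluated at the three tuples appearing in its formula. All of the main work of the proof reduces to computing $F$ at these three tuples.

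I would begin by noting that each of the tuples $(0,\ldots,0)$, $(\overbrace{1,\ldots,1}^{i-1},i+1,\overbrace{0,\ldots,0}^{k-i})$ for $i=1,\ldots,k$, and $(1,\ldots,1)$ corresponds to a set $S$ in the backbone $\mathcal{B}$: the first gives $S = \varnothing$; the second gives a set intersecting block $i$ in all $i+1$ of its elements and each earlier block in a single element, so at most one block has intersection at least $2$; the last gives a transversal hitting every block in one element. Hence in each case $f_k(S)$ is evaluated by the first case of Definition~\ref{def:fkt}, i.e.\ $\frac{M_S + Z_S - [S\neq\varnothing]}{2}$. A short case-by-case computation then gives
\[
F(\overbrace{0,\ldots,0}^k) = \tfrac{k}{2}, \qquad
F(\overbrace{1,\ldots,1}^{i-1}, i+1, \overbrace{0,\ldots,0}^{k-i}) = \tfrac{(i+1)+(k-i)-1}{2} = \tfrac{k}{2}, \qquad
F(\overbrace{1,\ldots,1}^k) = 0,
\]
where the second identity uses $M_S = i+1$ and $Z_S = k-i$, and the third uses $M_S = 1$, $Z_S = 0$.

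I would then substitute these three values into the formula from Corollary~\ref{cor:submoddist}:
\[
\nu(f_k) = \frac{\tfrac{k}{2} + k\cdot \tfrac{k}{2}}{2(k+1)} - \frac{0}{2} = \frac{(k+1)\cdot \tfrac{k}{2}}{2(k+1)} = \frac{k}{4}.
\]
This directly gives the claimed lower bound $k/4$ on the distance of $f_k$ to submodularity. There is no real obstacle here: all difficulty has already been absorbed into Corollary~\ref{cor:submoddist} and into the careful design of $f_k$ in Definition~\ref{def:fkt} (which arranges precisely that $F$ takes the same value $k/2$ on the ``off-diagonal'' tuples while vanishing on the all-ones tuple). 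The only thing to double-check is that the tuples involved indeed lie in the backbone, so that the simple branch of the definition applies; this is immediate from inspection of each tuple.
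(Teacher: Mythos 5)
Your proof is correct and follows exactly the paper's argument: apply Corollary~\ref{cor:submoddist} to the $(2,3,\ldots,k+1)$-block function $f_k$, evaluate $F$ at the three kinds of tuples (all of which lie in the backbone, giving $k/2$, $k/2$, and $0$ respectively), and simplify to $k/4$. The only difference is that you spell out the backbone membership check and the values of $M_S$ and $Z_S$ more explicitly than the paper does.
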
\beforeproof
\begin{proof}
By applying Corollary~\ref{cor:submoddist}, we obtain that the distance of $f_k$ to submodularity is at least:
\begin{align*}
\nu(f_k) & =  \frac{\frac{k}2 + \sum_{i=1}^{k}  \frac{(i+1) + (k-i) - 1}2}{2 \cdot (k+1)} - \frac12 \cdot 0 
=  \frac{\frac k2 + \sum_{i=1}^{k}  \frac{k}2 }{2 \cdot (k+1)}  = \frac{k \cdot (k+1)}{4 \cdot (k+1)}= \frac k4.\qedhere
\end{align*}
\end{proof}\afterproof
We then have the lower bound for the 
submodular case.
\begin{theorem}
\label{thm:lbfull}
  There is an infinite  sequence of increasing integers $1 < n_1 < n_2
  < \cdots$ such that, for each $k \ge 1$, there exists an
  approximately $\mathcal{C^{\full}}$-submodular function $f_k:
  2^{[n_k]} \rightarrow \Re$ whose distance to submodularity is larger
  than $\sqrt{n_k / 8} - 3/8$.
\end{theorem}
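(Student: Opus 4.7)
The plan is to simply assemble the already-proven pieces and verify one arithmetic inequality. Take the sequence $n_k = \frac{(k+3)k}{2}$ from Definition~\ref{def:fkt}; this sequence is strictly increasing in $k$, and $n_1 = 2 > 1$, so it satisfies the ordering requirement. For each $k \ge 1$, let $f_k : 2^{[n_k]} \to \Re$ be the block function of Definition~\ref{def:fkt}. Lemma~\ref{lem:fkt1as} already establishes that $f_k$ is approximately $\mathcal{C}^{\full}$-submodular, so the only remaining task is to lower bound its distance to submodularity by the claimed expression in terms of $n_k$.

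By Lemma~\ref{lem:fkt:opt:lb}, the distance of $f_k$ to submodularity is at least $k/4$. So it suffices to show that $k/4 > \sqrt{n_k/8} - 3/8$, or equivalently, after multiplying through by $4$, that $k + 3/2 > \sqrt{k(k+3)/2 \cdot 1/2 \cdot 4} = \sqrt{k(k+3)}$. Squaring the (positive) left-hand side gives $(k+3/2)^2 = k^2 + 3k + 9/4 > k^2 + 3k = k(k+3)$, which holds strictly. Hence $k + 3/2 > \sqrt{k(k+3)}$ and therefore
\[
\frac{k}{4} > \frac{\sqrt{k(k+3)}}{4} - \frac{3}{8} = \sqrt{\frac{k(k+3)}{16}} - \frac{3}{8} = \sqrt{\frac{n_k}{8}} - \frac{3}{8},
\]
as desired.

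There is essentially no obstacle: all the heavy lifting has been done in Lemmas~\ref{lem:fkt1as} and~\ref{lem:fkt:opt:lb} (the approximate submodularity of $f_k$ and the $k/4$ lower bound on its distance to submodularity via Corollary~\ref{cor:submoddist}). The final theorem is just the translation of the parameter $k$ into a bound stated in terms of the universe size $n_k$, which amounts to inverting the quadratic $n_k = k(k+3)/2$ with a slight additive slack of $3/8$ absorbed by the constant term in $(k+3/2)^2$.
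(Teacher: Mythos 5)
Your proposal is correct and matches the paper's own proof: both take $n_k=\frac{(k+3)k}{2}$, invoke Lemma~\ref{lem:fkt1as} for approximate $\mathcal{C}^{\full}$-submodularity and Lemma~\ref{lem:fkt:opt:lb} for the $k/4$ distance bound, and then verify $k/4>\sqrt{n_k/8}-3/8$ via the inequality $(k+3/2)^2>k(k+3)$ (which the paper attributes to AM--GM and you check by expanding the square). No differences of substance.
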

\beforeproof
\begin{proof}
  Given a $k \ge 1$, pick the $f_k$ of Definition~\ref{def:fkt}.
  Then, by Lemma~\ref{lem:fkt1as}, $f_k$ is approximately
  $\mathcal{C}^{\full}$-submodular. The size of the ground set
  of $f_k$ is equal to $n_k = \frac{ (k+3) k}2$.  By the AM-GM
  inequality, we have $(k+3/2)^2 > (k+3) k$. Thus, $n_k < \frac12
  \cdot \left(k +\frac32\right)^2$ and, therefore, $\sqrt{\frac{n_k}8}
  < \frac{k}{4} + \frac38$.  By Lemma~\ref{lem:fkt:opt:lb}, the
  distance of $f_k$ to submodularity is at least $\nu(f_k) = k/4$, and
  by the latter inequality
$\nu(f_k) > \sqrt{\frac{n_k}8} - \frac38.\qedhere$
\end{proof}\afterproof
Note that the lower bound also holds if we restrict the functions to be non-negative. If we define $g_k$  as $g_k(S) = f_k(S) - \min_T f_k(T)$, we have that (i) $g_k$ is non-negative,  that (ii) $g_k$ retains the same approximate submodularity property of $f_k$, and that (iii) $g_k$ and $f_k$ are at the same distance to submodularity.

\providecommand{\poly}{\mathrm{poly}}
\providecommand{\todo}[1]{\textcolor{blue}{TODO: #1}}

\newcommand{\argmax}{\arg\max}
\newcommand{\maxo}{{\sc Max-Sample} oracle\xspace}
\newcommand{\maxd}{{\sc Max-Dist} oracle\xspace}
\newcommand{\maxad}{approx-{\sc Max-Dist} oracle\xspace}
\newcommand{\CA}{\mathcal{A}}
\newcommand{\CM}{\mathcal{A}}
\newcommand{\CB}{\mathcal{B}}
\newcommand{\CD}{\mathcal{D}}

\newcommand{\mymath}[1]{\newline \centerline{$\displaystyle{#1}$}}

\section{Lower Bounds for Other Classes of Constraints}
\label{sec:other}

In this section, we give lower bounds on the distance to submodularity of functions that are approximately submodular with various other classes of constraints.

  Observe that the polytope of approximately $\mathcal{C}^{\full}$-submodular functions is contained in the polytope of approximately $\mathcal{C}^{\dimin}$-submodular functions, which is contained in the polytope of $\mathcal{C}^{\cross}$-submodular functions.

\subsection{An $\Omega(n)$ lower bound for approximately
  $\mathcal{C}^{\dimin}$-submodular functions}

In this section we obtain an easy lower bound for the
$\mathcal{C}^{\dimin}$-submodular case.
\begin{theorem}\label{thm:lbdimin}
For each odd $n \ge 1$, there exists an approximately
  $\mathcal{C}^{\dimin}$-submodular function $f: 2^{[n]} \rightarrow
  \Re$ whose distance to submodularity is at least $(n-1)/8$.
\end{theorem}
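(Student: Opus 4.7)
The plan is to use a very simple ``threshold'' construction. For $n$ odd, set $m = (n-1)/2$ and take $f: 2^{[n]} \to \Re$ defined by $f(S) = \max(0, |S| - m)$. Every marginal $f(A \cup \{c\}) - f(A)$ equals $0$ when $|A| < m$ and $1$ otherwise. Thus for any $A \subset B$ with $c \notin B$, the two marginals both lie in $\{0,1\}$ and so differ by at most $1$, which is exactly $1$-approximate $\mathcal{C}^{\dimin}$-submodularity. This verification is immediate.

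The lower bound on the distance to submodularity will be proven in two moves: symmetrization followed by a concavity calculation. Given any submodular $g$ with $\|f - g\|_\infty \le D$, averaging over permutations of $[n]$,
\[
\tilde g(S) \;=\; \frac{1}{n!} \sum_{\pi} g(\pi(S)),
\]
produces a function that is still submodular (submodularity is preserved under convex combinations) and symmetric in $[n]$. Because $f$ itself is symmetric, $\|f - \tilde g\|_\infty \le D$ as well. Writing $\tilde g(S) = \tilde h(|S|)$, symmetry together with submodularity is equivalent to $\tilde h$ being discretely concave on $\{0, 1, \ldots, n\}$ (apply submodularity to pairs $T \cup \{a\}, T \cup \{b\}$).

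The final step applies the concavity of $\tilde h$ at $m$ with endpoints $0$ and $n$:
\[
\tilde h(m) \;\ge\; \frac{n-m}{n}\,\tilde h(0) + \frac{m}{n}\,\tilde h(n).
\]
Substituting the pointwise-closeness bounds $\tilde h(0) \ge -D$, $\tilde h(n) \ge (n-m) - D$ and $\tilde h(m) \le D$ and simplifying gives $2D \ge m(n-m)/n$. Plugging in $m = (n-1)/2$ produces $D \ge (n^2-1)/(8n) \ge (n-1)/8$, the last inequality being equivalent to $n \ge 1$. No substantial obstacle is expected; the two things to get right are (i) the symmetrization, which is what reduces the problem to a scalar concavity inequality, and (ii) choosing $m$ near $n/2$ (here $m=(n-1)/2$, an integer since $n$ is odd) to maximize the quadratic $m(n-m)/n$ and yield the clean bound $(n-1)/8$.
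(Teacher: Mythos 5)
Your construction is exactly the paper's: the same threshold function $f(S)=\max(0,|S|-\tfrac{n-1}{2})$, and your verification of $1$-approximate $\mathcal{C}^{\dimin}$-submodularity (both marginals lie in $\{0,1\}$) matches the paper's computation. Where you diverge is the distance lower bound, and your route is genuinely different and also correct. The paper argues directly: it picks the four sets $A=[\tfrac{n-1}{2}]$, $[n]\setminus A$, $[n]$, $\varnothing$, applies the single submodularity constraint $g(A)+g([n]\setminus A)\ge g([n])+g(\varnothing)$, and uses the triangle inequality over these four sets to conclude that some set has error at least $\tfrac14\bigl(\tfrac{n+1}{2}-1\bigr)=\tfrac{n-1}{8}$. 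You instead symmetrize $g$ over all permutations (legitimate, since submodularity is closed under convex combinations and relabelings, and $f$ is symmetric so the $\ell_\infty$ distance does not increase), reduce to discrete concavity of the profile $\tilde h$, and interpolate between $0$ and $n$ at the point $m=(n-1)/2$. This chains many submodularity constraints rather than one, and buys you the marginally sharper bound $D\ge\frac{n^2-1}{8n}\ge\frac{n-1}{8}$; more importantly, it is a reusable template that converts distance-to-submodularity lower bounds for any symmetric set function into distance-to-concavity of its cardinality profile. The paper's argument is shorter and needs no machinery, but yours is a clean alternative with no gaps.
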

\begin{proof}
  Let the function $f: 2^{[n]} \rightarrow \Re$ be defined as $f(S) =
  \max\left(0, |S| - \frac{n-1}2\right)$, for each $S \subseteq [n]$.

  First, consider any $\{A, B\} \in \mathcal{C}^{\dimin}$. Wlog, let
  $a = |A| \le |B| = b$; then $|A \cap B| = a - 1$ and $|A \cup B| = b
  + 1$. We have,
$$f(A) - f(A \cap B) = \max\left(0, a- \frac{n-1}2\right) - \max\left(0, a-1- \frac{n-1}2\right) \ge 0,$$
and
$$ f(B) - f(A \cup B) = \max\left(0, b- \frac{n-1}2\right) -
\max\left(0, b +1 - \frac{n-1}2\right) \ge -1.$$
Thus, $f(A) + f(B) \ge f(A \cap B) + f(A \cup B) - 1$, for each $\{A,
B\} \in \mathcal{C}^{\dimin}$. The function $f$ is therefore
approximately $\mathcal{C}^{\dimin}$-submodular.

We will now show that for any submodular function $g$, there is a
subset $S \subseteq [n]$ such that $|f(S) - g(S)| \geq (n-1)/8$. Let
$A = \left[\frac{n-1}2\right]$. Then, $f(A) + f([n] \setminus A) = 0 + 1 = 1$.
Moreover, $f([n]) + f(\varnothing) = \frac{n+1}2 + 0 = \frac{n+1}2$.
Thus, for any submodular function $g$, 
\begin{eqnarray*}
  \lefteqn{
 \sum_{S\in \{A, [n] \setminus A,[n], \varnothing\} } |f(S) - g(S) | \quad \ge \quad
 \sum_{S\in \{[n], \varnothing\}} (f(S) - g(S)) - \sum_{S\in \{A,
   [n] \setminus A\}} (f(S) - g(S))} \\
 & = & \left(f([n]) + f(\varnothing) - f(A) - f([n] \setminus A)\right) + \left(g(A) + g([n] \setminus A)\right) - \left(g([n]) + g(\varnothing)\right)\\
 & \ge & f([n]) + f(\varnothing) - f(A) - f([n] \setminus A) = \frac{n+1}{2} - 1,
\end{eqnarray*}
where the last inequality follows from the submodularity of $g$.
Hence, the distance to submodularity of $f$ is at least $\frac14 \cdot
\left(\frac{n+1}2 - 1\right) = \frac{n-1}8$.
\end{proof}

\subsection{A $\Omega( n^2)$ lower bound for approximately
  $\mathcal{C}^{\cross}$-submodular functions}
\label{sec:lbc3}

Finally, we show a lower bound for the $\mathcal{C}^{\cross}$ case that
matches exactly the upper bound given by our algorithmic filter.
\begin{theorem} \label{thm:lbcross}
For any $n \ge 1$, there exists an approximately
$\mathcal{C}^{\cross}$-submodular function $f: 2^{[n]} \rightarrow
  \Re$ whose distance to submodularity is at least $(1/8) \cdot
  \left\lfloor n^2/2 \right\rfloor \ge (n^2 - 1)/16$.
\end{theorem}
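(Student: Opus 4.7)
The plan is to exhibit the cardinality quadratic $f(S) = \tfrac{1}{2}(|S| - n/2)^2$ as the extremal function and show that it sits at distance at least $\tfrac{1}{8}\lfloor n^2/2\rfloor$ from every submodular function. The choice of $f$ is natural in view of Theorem~\ref{thm:ubcross}: the filter $g_{f,\epsilon}$ constructed there \emph{subtracts} exactly this quadratic, so an $f$ equal to the bare quadratic should saturate the upper bound. For the approximate $\mathcal{C}^{\cross}$-submodularity check, a direct expansion at any $\{A,B\}$ with $|A|=|B|=c$, $|A\cap B|=c-1$, $|A \cup B|=c+1$ gives
\[
f(A) + f(B) - f(A \cup B) - f(A \cap B) = (c - n/2)^2 - \tfrac12\bigl[(c+1 - n/2)^2 + (c-1 - n/2)^2\bigr] = -1,
\]
so $\epsilon^{\cross}(f) = 1$.

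The heart of the argument is a discrete-convexity bound on $H_k := F_k - G_k$, where $F_k = \tfrac12(k - n/2)^2$ is the (well-defined) value of $f$ on size-$k$ sets and $G_k := \binom{n}{k}^{-1}\sum_{|S|=k} g(S)$ is the level-set average of an arbitrary submodular $g$. I would first prove by a standard double-counting argument that the averages $G_k$ are concave in $k$ (i.e.\ $2G_{k+1} \ge G_k + G_{k+2}$): sum the submodularity inequality $g(A \cup \{x\}) + g(A \cup \{y\}) \ge g(A) + g(A \cup \{x,y\})$ over all $A$ of size $k$ and all unordered pairs $\{x,y\} \subseteq [n] \setminus A$, count multiplicities of each size-$(k+1)$ set $B$ on the left and of each size-$k$ or size-$(k+2)$ set on the right, and divide through. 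Since $F_k$ has second difference exactly $1$, this implies that $H_k$ has second difference at least $1$, while still satisfying $|H_k| \le X := \max_S |f(S) - g(S)|$.

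To conclude, I would apply the chord-versus-parabola inequality: since $\psi_k := H_k - k^2/2$ is convex in the discrete sense, for any $k_1 < k_2 < k_3$ in $\{0,\ldots,n\}$,
\[
(k_3-k_1)\,H_{k_2} \le (k_3-k_2)\,H_{k_1} + (k_2-k_1)\,H_{k_3} - \tfrac12\,(k_2-k_1)(k_3-k_2)(k_3-k_1).
\]
Bounding each $|H_{k_i}|$ by $X$ on the right then yields $X \ge \tfrac14(k_2-k_1)(k_3-k_2)$. The choice $k_1 = 0$, $k_2 = \lfloor n/2\rfloor$, $k_3 = n$ gives $X \ge \lfloor n/2\rfloor\lceil n/2\rceil / 4 = \lfloor n^2/4\rfloor/4 = \tfrac18 \lfloor n^2/2\rfloor$, using the identity $\lfloor n^2/2\rfloor = 2\lfloor n^2/4\rfloor$. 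The main conceptual step is the concavity of the cardinality averages for submodular $g$; the rest is routine algebra, and the exact constant $1/8$ emerges automatically because the second difference of $F_k$ is exactly $1$.
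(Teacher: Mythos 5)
Your proposal is correct, and the extremal function is literally the paper's: $\tfrac12(|S|-n/2)^2 = \tfrac{(n-2|S|)^2}{8}$, with the same direct computation showing $\epsilon^{\cross}(f)=1$. Where you diverge is the distance lower bound. The paper uses a single submodularity constraint: with $A=[\lfloor n/2\rfloor]$ and $B=[n]\setminus A$, the quantity $f(A\cup B)+f(A\cap B)-f(A)-f(B)$ equals $\tfrac{n^2-[n\text{ odd}]}{4}$, and since a submodular $g$ must satisfy $g(A)+g(B)\ge g(A\cup B)+g(A\cap B)$, perturbing four values by at most $X$ each forces $4X\ge \tfrac{n^2-[n\text{ odd}]}{4}$, i.e.\ $X\ge\tfrac18\lfloor n^2/2\rfloor$. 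You instead prove that the level-set averages $G_k$ of any submodular $g$ are concave in $k$ (your double-counting is right: the multiplicities work out to $(k+1)(n-k-1)\binom{n}{k+1}$ on the left and $\binom{n-k}{2}\binom{n}{k}=\binom{k+2}{2}\binom{n}{k+2}$ on the right, all equal up to the factor $2$), and then run a chord-versus-parabola comparison at $k_1=0$, $k_2=\lfloor n/2\rfloor$, $k_3=n$. The algebra checks out and lands on exactly the same constant, since $\tfrac14\lfloor n/2\rfloor\lceil n/2\rceil=\tfrac18\lfloor n^2/2\rfloor$. Your route is heavier machinery for the same number --- with $k_1=0$ and $k_3=n$ the endpoints are single sets and only the middle level is genuinely averaged, so the extra generality buys nothing here --- but the concavity-of-level-averages lemma is a clean structural fact and would let you read off lower bounds for other cardinality-based functions directly, whereas the paper's argument is tailored to one violated constraint.
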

\begin{proof}
We define $f$ to be: $$f(S) = \frac{(n - 2 \cdot |S|)^2}8.$$

We will first lower bound the distance of $f$ to submodularity.
For
 $A = \left[\left\lfloor n / 2 \right\rfloor\right], B = [n]
 \setminus A$, we have 
$$f(A) + f(B) = \frac{\left(n - 2 \cdot \left\lfloor n / 2 \right\rfloor\right)^2 + \left(n - 2 \cdot \left\lceil n / 2 \right\rceil\right)^2}{8} = \left\{\begin{array}{ll}
0 & \text{if } n \text{ is even},\\
\frac14 & \text{if } n \text{ is odd}.
\end{array}\right.$$
On the other hand, we have
$$f(A \cup B) + f(A \cap B) = f([n]) + f(\varnothing) = \frac{2 \cdot
  n^2}8 = \frac{n^2}4.$$
Thus, $(f(A \cup B) + f(A \cap B)) - (f(A) + f(B)) = \frac{n^2 - [n
  \text{ is odd}]}4 = \nu$. It follows that, if the maximum additive
absolute change to the values of $f$ is less than $\nu / 4$, the
resulting function will not be submodular. Thus, $f$ is at distance at
least $\nu / 4 = \frac{n^2 - [n \text{ is odd}]}{16}$ from
submodularity.

Now, consider any $\{A, B\} \in \mathcal{C}^{\cross}$.  There must exist
$a,b \not\in A \cap B$, $a \ne b$, such that $A = (A \cap B) \cup \{a\}$ and
$B = (A \cap B) \cup \{b\}$. Let $i = |A \cap B|$. We then have: 
\[
  (f(A \cup B) + f(A \cap B)) - (f(A) + f(B)) = \frac{(n-2
    (i+2))^2 + (n-2i)^2 -2(n - 2(i+1))^2}8 = 1,
\]
thus, $f$ satisfies the approximate
$\mathcal{C}^{\cross}$-submodularity condition. 
\end{proof}
 \section{Experiments}\label{sec:expts}
In this section we describe the experiments with our filter.  We sampled approximately submodular (cut) functions, and compared the results obtained by running maximization 
algorithms on the original approximately submodular function $f$, and on the submodular function $g_{f,\epsilon}$ obtained with our filter.

Recall that the standard algorithm for maximizing monotone submodular functions is the greedy algorithm~\cite{nemhauser1981maximizing}. 
An interesting property of our filter is the following: for any function $f$, and for any value of $\epsilon$, 
the execution of the greedy algorithm on the filtered function $g_{f,\epsilon}$ is the same as the execution of the greedy algorithm on the original function $f$. Indeed, for each set $S$ and each element $x \not\in S$, the quantities  $g_{f,\epsilon}(S) - f(S)$ and  $g_{f,\epsilon}(S \cup \{x\}) - f(S \cup \{x\})$ are  independent of $x$ and hence greedy makes the same choice for $f(\cdot)$ and $g_{f,\epsilon} (\cdot)$ at each step. Hence, in our experiments, we focus on algorithms for non-monotone submodular functions, more specifically, the cut function. Recall that, given an input graph $G(V, E)$, the cut function $\mathsf{cut}_G(\cdot)$ is defined as: $\forall S \subseteq V, \mathsf{cut}_G(S) = | E(S , V \setminus S) |$, i.e., the number of edges from $S$ to $V \setminus S$ in $G$.

\paragraph{Local search algorithms.}
In this setting, our approximate submodular function are chosen as follows: we sample a graph $G$ from the Erd\H{o}s--R\'enyi model with $n=100$ nodes and $p = 0.5$. Then, the function $f$ is  defined as: $\forall S \subseteq V,$ $f(S) = \mathsf{cut}_G(S) + Z_S$ where $Z_S$ is a random variable sampled iid from $N(\mu = 0, \sigma^2)$; we considered $\sigma^2 \in\{ np/2, np\}$.

Given access to $f(\cdot)$, the algorithm first estimates $\epsilon$ by sampling pairs of sets. The pair $\{S_i,T_i\}$ was sampled as follows. We first sampled two integers $n_1$ and $n_2$ uniformly independently and uniformly at random from $[n-1]$. $S_i$ (resp. $T_i$) is then created by sampling $n_1$ (resp. $n_2$) elements from the ground set $V$ without replacement. We then estimate $\epsilon$ as
$ \epsilon = \max\left(0, \max_{i} \left[ f(S_i \cup T_i) + f(S_i \cap T_i) - f(S_i) - f(T_i) \right]\right)$.
Note that the true $\epsilon$ can be obtained by taking the maximum over all set pairs.  All experiments were done on a standard i5 desktop.

With this estimated $\epsilon$, the maximization algorithm uses a local search procedure~\cite{feige2011maximizing} for the function $g_{f,\epsilon}$. The table below notes the results for $\sigma^2 = np/2$ and $\sigma^2 = np$. 
For each value of $\sigma$, for 50 times (each time, with different random graph and noise realizations), we ran both algorithms. Denote the output set of local search on $f(\cdot)$ as $S_{f}$, and the output set of local search on $g_{f,\epsilon}(\cdot)$ as $S_{g}$. We calculate the ratio $\frac{f(S_g)}{f(S_f)}$  for 50 instantiations of the random graph, and of the noise $Z$, and then report the minimum, mean, and standard deviation of the ratios.

Local search using the filtered function never returns a set of lesser value than local search using the original function. In a large number of runs, the sets returned are actually  same. When they differ, the filtered function  yields a set of higher value than the original function. This observation is consistent across the values of $n$ and $p$ that we have experimented with. 

\paragraph{Double greedy algorithm.}
In the second set of experiments we used the randomized double greedy (RDG) algorithm~\cite{bfss15} for maximization. In this setup, we created the submodular function as follows: first  a stochastic block model random graph in $1000$ nodes is created, with partition size as $[100, 900]$ and the block probability matrix = $[[0.1, 0.8], [0.8, 0.1]]$. Again, we consider two situations where noise is added with $\sigma^2$ to be $500$ and $1000$ respectively.

The $\epsilon$ is estimated as described above and  the RDG algorithm is run on the filtered function. 
We summarize the results in the following table. For both the algorithms, we observe that the filtered function yields a better solution than the unfiltered one. 

\paragraph{Comparison against exact solution.} In the above experiments,  the ground-set size is large and it is computationally infeasible to obtain the exact maximum. We also ran the experiments with $n=20$ and the graph generated as per the description above and with $\sigma=np/2$, and using an exhaustive search for maximization.  The (min, median, s.d.) ratios obtained over 10 iterations are reported in the Table~\ref{tab:submax-synth}. As per the results, the set found using the filtered function has a value which is at least $91\%$ of the true maximum; the low standard deviation $(0.017)$ that we observed demonstrates the robustness of this statement to the errors in $\epsilon$-estimation.  

\paragraph{Summary.} To summarize, the experiments demonstrate the using the proposed filter gives empirically a good solution, often better that running the approximation algorithms on the unfiltered approximately submodular function. Even when exhaustive search is being used, the solution from the filtered function is quite close to the optimal. This observation holds across the different $\epsilon$-values that we tested. 

\begin{table}
\centering
\begin{tabular}{c|r|r|r}
\hline
 &  & $\sigma^2=np/2$ & $\sigma^2=np$  \\
   \cline{2-4}
\multirow{3}{*}{Local search}  & min(ratio) & 1 & 1  \\
  & avg(ratio) & 2.981 & 2.174    \\
 & median (ratio) & 2.376 & 2.506\\ 
 \hline
 Randomized & min(ratio) & 0.924 & 1.922  \\
 double & avg(ratio) & 2.282 & 3.763    \\
 greedy & median(ratio) & 1.947 & 3.313\\ 
 \hline
 &  min(ratio) & 0.918 & - \\
 Exhaustive & avg(ratio) & 0.945 & - \\
   & median (ratio) & 0.936 & - \\
 \hline
\end{tabular}
\caption{Results for local search, double greedy and exhaustive search algorithms with our filter.  \label{tab:submax-synth}}
\end{table}

\subsection{Experiments on real data}

We experimented with the BitCoin trust graph ($5000$ nodes) obtained from SNAP repository
\footnote{
\url{https://snap.stanford.edu/data/soc-sign-bitcoin-otc.html}}.
Here, the nodes represent users and the edge $(i,j)$ represents the trust value assigned by user $i$ to user $j$. We considered the graph as an undirected graph, the weight of every edge is an integer in $[-10, 10]$ (the weight of an undirected edge is an average of the two directed edges, if both existed). We considered the cut-function on this graph and we created an approximately submodular function in the same way as before, i.e.,
$f(S) = \mathsf{cut}_G(S) + Z_S$ where $Z_S$
is defined as follows: $Z_S = c$ with probability $0.5$ and $Z_S = -c$ else. (Observe that the noise model is additive and thus guarantees that $f$ is approximately submodular; additive noise models capture the fact that, in practice, one can only estimate the weight of a cut by sampling the weight of its edges and, thus, one is bound to an additive error in the estimation. We chose the simplest additive model for our experiments.)
Note that $f(S)$ is a $(4c)$-approximately submodular and is neither monotone nor non-negative. We follow the same procedure for estimating $\epsilon$ as described in Section~\ref{sec:expts}, run randomized double greedy (RDG) and evaluate the result in a similar fashion as before. 

Table~\ref{tab:submax} presents the minimum, mean, and median of the ratios of the solutions obtained by RDG on actual $f(\cdot)$ and that applied on the filtered $g_{f, \epsilon}(\cdot)$. Notice that unlike the result on synthetic graphs, the solution obtained from the filtered $g_{f, \epsilon}(\cdot)$ can sometimes be a little worse than the one obtained from the unfiltered $f(\cdot)$. However, note that the average ratio is always larger than than 0.98, that the ratio is  a function of the noise model and, while filtering does provide a provable guarantee on the solution, no  guarantee is available on the solution obtained by RDG on the unfiltered $f(\cdot)$.

\begin{table}
\centering
\begin{tabular}{c|r|r|r}
\hline
 &  & $c=160$ & $c=80$  \\
  \cline{2-4}
 \hline
  & min(ratio) & 0.831 & 0.893   \\
 RDG & avg(ratio) & 0.983  & 1.01     \\
  & median(ratio) &0.99 & 1.02\\ 

 \hline
\end{tabular}
\caption{Results for RDG with our filter on the BitCoin graph.  \label{tab:submax}}
\end{table} \section{Conclusions}\label{sec:conc}

In this paper we have taken a first step in studying the distance to
submodularity of approximately submodular functions. There are many
open questions that stand out: first, what is the tight bound on the distance to submodularity of approximately submodular functions? We have proved that it is polynomial in $n$ (whereas, in the approximately modular case it is a constant independent of $n$), bounded between $\Omega(\sqrt{n})$ and $O(n^2)$.
There are also some algorithmic open questions. What type of optimization can we perform on approximately submodular functions? We have shown that it is possible to compute, in polynomial time, an additive approximation of the minimum value of 
these functions. Can one obtain a better approximation in polynomial time? 
And, with which approximation guarantees can approximately submodular functions be optimized under cardinality constraints, and under matroid constraints?

\section*{Acknowledgements}
Flavio Chierichetti is supported in part by the ERC Starting Grant DMAP 680153, by a Google Focused Research Award, by the PRIN project 2017K7XPAN, by BiCi --- Bertinoro international Center for informatics, and by the ``Dipartimenti di Eccellenza'' grant awarded to the Dipartimento di Informatica at Sapienza. Anirban Dasgupta is supported in part by the Google Awards (2015 and 2020) and the CISCO University grant (2016).

\balance
\bibliographystyle{alpha}
\bibliography{submod}

\end{document}